\newcommand*{\myeqref}[2][]{%
  \hyperref[{#2}]{#1(\ref*{#2})}%
}
\def\equationautorefname#1#2\null{%
  #1(#2\null)%
}
\newcommand{\R}{\mathbb{R}}
\newcommand{\N}{\mathbb{N}}
\newcommand{\cG}{\mathcal{G}}
\newcommand{\lp}{\left( }
\newcommand{\rp }{\right) }
\numberwithin{equation}{section}
\newcolumntype{C}{>{\raggedright\arraybackslash}p{1.5cm}<{}}
\newtheorem{theorem}{Theorem}[section]
\newtheorem{corollary}{Corollary}[section]
\newtheorem{assumption}{Assumption}[section]
\newtheorem{defn}{Definition}[section]
\title{Algebraic framework for determining laminar pattern bifurcations by lateral-inhibition in 2D and 3D bilayer geometries}
\author{Joshua W. Moore\footnote{Correspondence to Joshua W. Moore: moorej16@cardiff.ac.uk}$\, \, ^{,1}$, Trevor C. Dale$^{2}$ \& Thomas E. Woolley$^{1}$}    
\date{%
    \small $^1$School of Mathematics, Cardiff University, Senghennydd Rd, Cardiff, CF24 4AG, UK\\%
   \small $^2$School of Biosciences, Cardiff University, Museum Ave, Cardiff, CF10 3AX, UK\\%
}
\begin{document}
\maketitle

\vspace{-0cm}
\begin{center}
\large{\textbf{Abstract}}
\end{center}

Fine-grain patterns produced by juxtacrine signalling, have been studied using static monolayers as cellular domains. Unfortunately, analytical results are restricted to a few cells due to the algebraic complexity of nonlinear dynamical systems. Motivated by concentric patterning of Notch expression observed in the mammary gland, we combine concepts from graph and control theory to represent cellular connectivity. The resulting theoretical framework allows us to exploit the symmetry of multicellular bilayer structures in 2D and 3D, thereby deriving analytical conditions that drive the dynamical system to form laminar patterns consistent with the formulation of cell polarity. Critically, the conditions are independent of the precise dynamical details, thus the framework allows for the utmost generality in understanding the influence of cellular geometry on patterning in lateral-inhibition systems. Applying the analytic conditions to mammary organoids suggests that intense cell signalling polarity is required for the maintenance of stratified cell-types within a static bilayer using a lateral-inhibition mechanism. Furthermore, by employing 2D and 3D cell-based models, we highlight that the cellular polarity conditions derived from static domains have the capacity to generate laminar patterning in dynamic environments. However, they are insufficient for the maintenance of patterning when subjected to substantial morphological perturbations. In agreement with the mathematical implications of strict signalling polarity induced on the cells, we propose an adhesion dependent Notch-Delta biological process which has the potential to initiate bilayer stratification in a developing mammary organoid.

\section{Introduction}

Lateral feedback is considered a fundamental driving process for the emergence of fine-grain pattern formation \cite{GilbertScottF.2016Db}. Such patterning is critical in the development of many multicellular biological systems such as \textit{Drosophila} eye formation, murine hair organisation in auditory epithelia and establishing blood vessels during human embryogenesis \cite{cagan2009principles,togashi2011nectins,kim2015notch}. In contrast to the approach of using reaction-diffusion systems to generate spatially continuous patterns \cite{turing1952chemical}, systems of ordinary differential equations (ODEs) can be used to generate a discretised description of the space, enabling the formation of fine-grain patterns on the resolution of individual cells. These discrete spatial ODE models seek to emulate the behaviour of contact-dependent cell-cell signalling mechanism known as juxtacrine signalling, a common form of cellular communication in epithelial tissue \cite{GilbertScottF.2016Db}. \par

The juxtacrine signalling mechanism relies on membrane bound signal proteins on a sender cell binding to surface anchored receptors on a receiving cell, imposing a contact-dependence \cite{GilbertScottF.2016Db}. Critically, cells can only use juxtacrine signalling to communicate with their direct neighbours in the absence of receptor extensions \cite{Hadjivasiliou2016}, as demonstrated in \autoref{fig:Notch_diagram}a. Consequently, the spatial organisation of cells is of fundamental importance in orchestrating signal protein patterning required for specific organ development \cite{manz2010spatial}. \par

\begin{figure}[h!]
         \centering
         \includegraphics[width=0.6\textwidth]{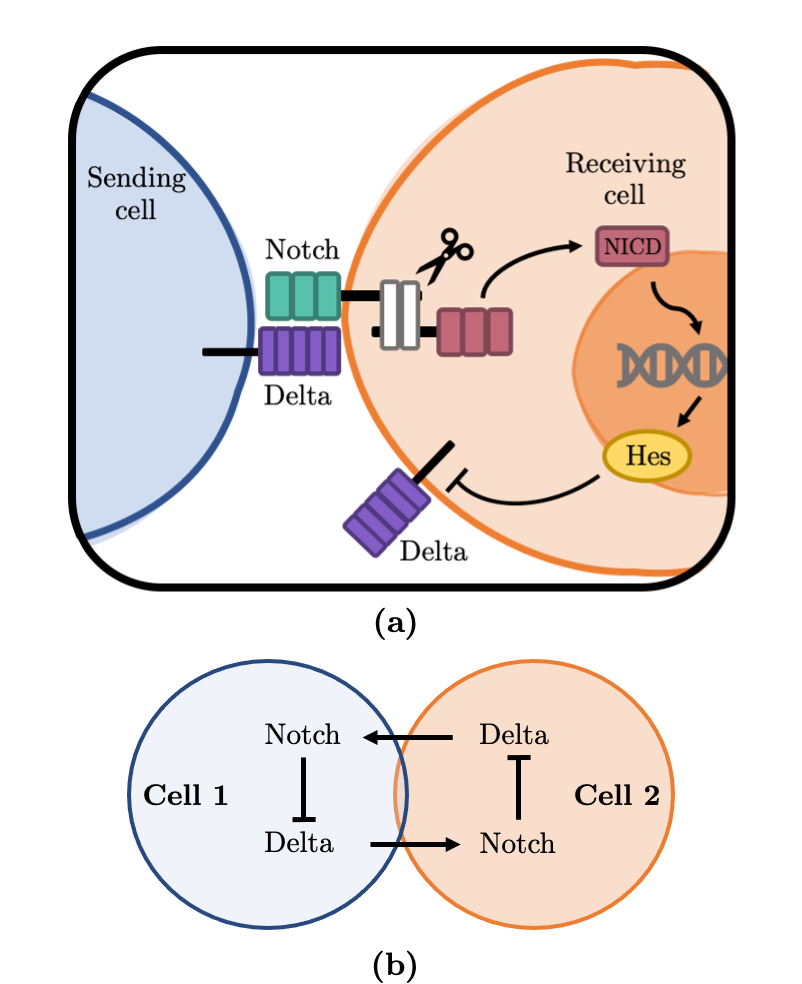}

 			\caption{A schematic diagram of the canonical Notch pathway as an example juxtacrine signalling mechanism. (a) Membrane bound Delta ligands (purple rectangles) on a signal sending cell bind to membrane bound Notch receptors (green rectangles) on a receiving cell. The activation of Notch receptors initiates the cleavage of Notch into the cytosol of the receiving cell, known as NICD. The NICD then translocates to the nucleus where it promotes the transcription of Hes, an inhibitor of Delta ligand targets. Adapted from \cite{Bray2006}. (b) A minimal representation of the negative feedback dynamics of Notch and Delta in coupled cells. This mathematical simplification was first conceived by Collier et al. (1996) \cite{Collier1996}.}
        \label{fig:Notch_diagram}
\end{figure}

Mathematically, juxtacrine pattern formation has been extensively studied over the last two decades \cite{Wearing2001,Webb2004,Webb2004_osc,Hadjivasiliou2016,Formosa-Jordan2014}, commonly focusing on lateral-inhibition mechanisms as the underlying biological process. An overarching conclusion from the family of papers focused on juxtacrine pattern analysis of lateral-inhibition models is that linear analysis techniques are insufficient to determine precise conditions for patterning, and are only able to predict the existence of patterning \cite{Wearing2001}. In light of this, there has been a reliance on numerical simulations to elucidate the parameter regimes required to produce patterning of various wavelengths. However, the lateral-inhibition model parameters are not the only factors influencing the emergence of patterns. The geometry of the cellular domain (epithelial sheet) on which the juxtacrine model is being applied has a large impact on the obtainable patterning. This was highlighted by Webb et al. (2004), where they compared a honeycomb domain to a simple grid domain in 2D under a standard four point connectivity stencil for cellular connectivity (see \autoref{fig:checkerboard_domains} and \autoref{fig:grid_neighbourhoods}a). In doing so, they show the considerable differences in parameter regimes required to achieve similar pattern formations for each domain type \cite{Webb2004_osc}. Moreover, cellular asymmetries have the capacity to produce unique patterning, which are unobtainable on regular domains, but are more biologically realistic \cite{Webb2004}. In support of this, when a lateral-inhibition model was coupled with a mechanism for cellular protrusions to encourage contact over larger distances, a large family of distinct patterns were observed over a regular 2D honeycomb grid \cite{Hadjivasiliou2016}. Although, to achieve the striped patterning that is observed in various biological systems such as in the mammary gland and zebrafish skin pigments \cite{Lilja2018,hamada2014involvement}, cellular protrusions must be directed perpendicular to the stripe, indicating a requirement for cellular polarity.\par

An alternative approach to pattern formation analysis in lateral-inhibition models was introduced by Arcak \cite{Arcak2013}, where they viewed cells as vertices on a connected graph that interact using dynamical input-output systems. This abstract approach produced analytical conditions for the existence and stability of checkerboard patterning in cyclic domains with an unbound number of cells, as demonstrated in \autoref{fig:checkerboard_domains}. Therefore, extending the analysis conducted in \cite{Wearing2001,Webb2004,Webb2004_osc} that was restricted to only two cells due to the complexity of the systems studied. Moreover, the graph theoretic approach to juxtacrine systems was further refined when graph partitioning was applied to represent patterning within collections of cells \cite{RufinoFerreira2013}, generalising the previous results of \cite{Arcak2013}, which developed a framework to prove the existence and stability of a family of patterns within periodic domains in both grid and hexagonal lattices. Such previous work emphasises the relationship between how cells are connected and the obtainable patterns. Nonetheless, these conditions were derived using static domains and were heavily dependent on a number of assumptions regarding the graph's topology (reviewed in \autoref{sec:graph_theory}). These assumptions cannot always be adhered to when investigating patterning on an evolving biological system, although they may be true in certain quasi-steady stages of its development.\par

 \begin{figure}[h!]
         \centering
         \includegraphics[width=0.6\textwidth]{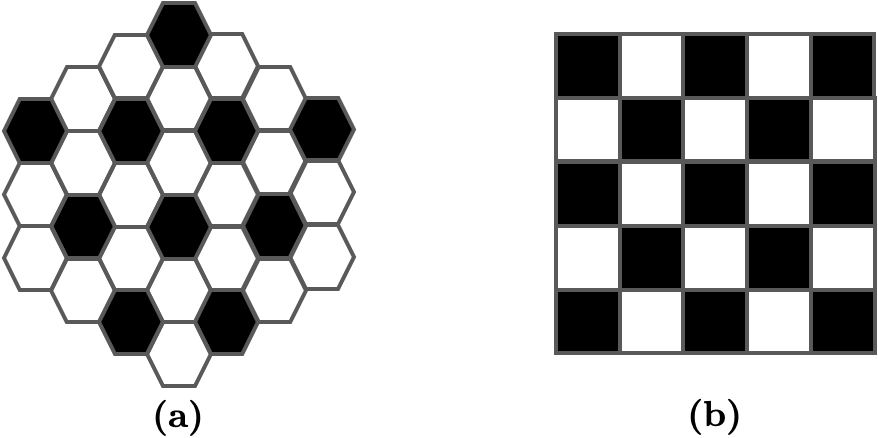}

 			\caption{Diagrams of checkerboard patterns of protein expression via lateral feedback in (a) honeycomb and (b) grid cellular domains \cite{Collier1996}.}
        \label{fig:checkerboard_domains}
\end{figure}

The canonical Notch pathway is a well-studied example of a juxtacrine signalling pathway with an essential role in cell fate determination and morphogenic bifurcations in developmental systems \cite{Lilja2018,Jorgensen2018,Fre2005,Lafkas2015}. The Notch pathway describes a lateral-inhibition mechanism as the release of Notch-intracellular-domain (NICD), via Notch receptor activation, has a downstream negative affect on Delta production. In turn, the Delta production inhibits Notch activation in neighbouring cells and encourage a negative feedback loop, see \autoref{fig:Notch_diagram}a. The release of Notch from the membrane into the cytosol triggers the transcription of members of the ``Hairy-Enhancer of Split'' (Hes) superfamily, in which target genes repress lineage specific determinants. Mammals exhibit four paralogues of the Notch receptor, Notch1 to Notch4, with associated Delta ligands that each observe the autoregulation mechanism outlined by the canonical pathway \cite{Lloyd-Lewis2019}. A more detailed description of the canonical Notch pathway can be found elsewhere \cite{Bray2006}. \par

A particular biological system that is highly dependent on the Notch pathway is the mammary organoid. Mammary organoids are three-dimensional tissue cultures that are currently the most accurate representation of \textit{in vivo} mammary gland biology \cite{Simian2017}. Throughout its development, the mammary organoid retains a consistent bilayer structure of cells as seen in \autoref{fig:mammary_organoid_intro}. That is, the outer layer holds the elongated, contractile basal cells, whereas the inner layer consists of cuboidal luminal cells. Once these layers have been established, a hollow lumen forms, surrounded by the bilayer of cells. \par
Notch1 signalling (denoted by Notch signalling hereafter) is a critical determinant of luminal cell differentiation in mammary epithelial cells (MECs) \cite{Lilja2018}. It has been established that Notch activation is required to support differentiation of the basal stem cells to the luminal population in the mammary organoid and therefore it is a key component in the maintenance of a developing mammary system \cite{Bouras2008}. In addition, sudden Notch activation within the luminal cells coincides with the locations of symmetry breaking events of embryonic MECs \cite{Lilja2018}. Thus, it has been hypothesised that Notch activation via basal cells, or contact with the basement membrane, is required to develop branched epithelia \cite{Lloyd-Lewis2019}. \par

During any stage of development of the mammary gland and organoid, MECs are capable of self-organising to form an outer layer of cells that highly express Delta (low Notch), and in contrast, inner layers of cells that surround a hollow lumen that express low Delta (high Notch), see Figures 3b-3d \cite{Bouras2008, Lafkas2013}. It is unclear whether this spatial patterning is a consequence or cause of the morphology of developing mammary ducts, although, it is clear that the concentric (laminar) patterning of the bilayer of cells is robust to morphological perturbations.\par

\begin{figure}[h!]
         \centering
         \includegraphics[width=1\textwidth]{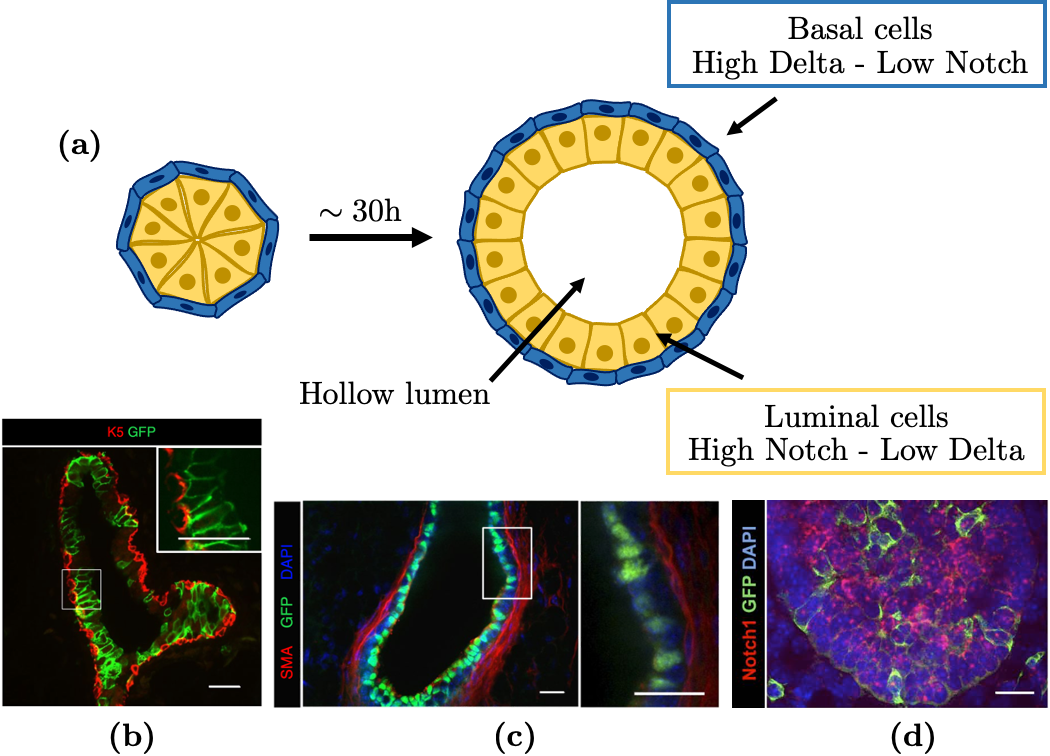}

 			\caption{The structure of a mammary organoid and the spatial distribution of Notch1 expression. (a) A simple 2D diagram of the structure of a developing mammary organoid, highlighting lumen formation and maintenance of stratified bilayer. (b-c) Cross-sections of mammary ducts of 6 week old mice, where Notch1-derived lineages are labelled in green by (b) membrane bound green fluorescent protein (GFP) and (c) nuclear GFP. (d) Representative sections of embryonic mammary buds. The small red dots highlight the presence of Notch1 protein which are clustered towards the centre of the bud.  Scale bar, $20\mu$m ($10\mu$m in magnifications).  Images reproduced from Lilja et al. 2018 \cite{Lilja2018}.}
        \label{fig:mammary_organoid_intro}
\end{figure}

Conditions defining a bilayer laminar pattern have yet to be derived using a simple mathematical lateral-inhibition model in spheroid geometries. Here, we apply our general framework to a previously developed ODE model of Notch-Delta, and obtain conditions on Delta transmission that are sufficient for the bilayer laminar patterns to form, which are in agreement with experimental observations. \par

We first construct our system of cells using the connected graph approach, which allows us to derive analytical conditions on cell-type dependent weightings of Delta cell-cell transmission and thus, predict the existence and stability of bilayer patterning by exploiting existing results in graph and control theory \cite{Arcak2013, RufinoFerreira2013}. Then by numerically comparing a variety of fixed cellular geometries in 2D and 3D, we provide a further restriction on the analytical bound to produce the desired patterning, thereby deriving an empirical relationship between signal transmission weightings and the cellular connectivity for patterning, independent of spatial dimension. In particular, the condition applies to fixed geometries in both 2D and 3D. We then show by integration of the Notch-Delta model (NDM) into a lattice free cell-based model using the cell-centre framework, how the transient transitions of cellular connectivity in a dynamic domain can alter patterning.

\newpage

\section{Theory}

We develop a framework to investigate cellular signalling polarity that is independent of a lateral-inhibition ODE system. Though, to elucidate the dependence of cell-type transmission of Delta in a bilayer structures we consider the original lateral-inhibition ODE model constructed by Collier et al. (1996) as an example system \cite{Collier1996}. By adapting the spatial averaging term to include cell-type dependent weightings on Delta transmission, we impose heterogeneity within the cellular system to promote the emergence and stability of bilayer laminar pattern formation of Notch-Delta that is observed experimentally in Figures 3b-3d. In addition, we introduce the notion of the graphical representation of cellular connectivity and a framework for cellular coupling to bridge the geometry of the fixed lattice to the dynamics of the ODE model.

\subsection{Lateral-inhibition model}\label{sec:lateral_inhibition_model}
The NDM developed by Collier et al. (1996) was the first explicit lateral-inhibition model that was used to investigate fine-grain patterns that are observed in a variety of biological systems \cite{Collier1996}. The intracellular model contains only two components, Notch ($N$) and Delta ($D$) activation, simplifying the underlying biochemical processes, which allows freedom of interpretation of $N$  and  $D$. When studying the dynamics of Notch-Delta in the mammary organoid, we will consider $N$ to be the NICD active protein concentration within the cytosol and $D$ to be the amount of active membrane bound Delta ligands on the surface of the cell, see \autoref{fig:Notch_diagram}a. The inverse relationship between intracellular Notch and Delta is the key feature of the spatially discrete ODE model, which is described by the negative feedback loop depicted in \autoref{fig:Notch_diagram}b.
The non-dimensional NDM defined by Collier et al. (1996) is based upon the following assumptions:
\begin{enumerate}
\item Cells interact through Delta–Notch signalling only with cells with which they are in direct contact, that is, adhering to the juxtacrine mechanism.
\item The rate of production of Notch activity is an increasing function of the level of Delta activity in neighbouring cells.
\item The rate of production of Delta activity is a decreasing function of the level of activated Notch in the same cell.
\item Production of Notch and Delta activity is balanced by decay, described by simple exponential decay with fixed rate constants.
\item The activity of Notch and Delta are uniformly distributed throughout the cell.
\item Instantaneous transcription of downstream Notch targets such that the model assumes no delay in Notch and Delta interactions.
\end{enumerate}
These assumptions outline the characteristics of a general lateral-inhibition model, which can be formalised mathematically as,

\begin{align}
\dot{N}_{i} &=\underbrace{ f\left( \langle D_{i} \rangle \right)}_{ \substack{\text{NICD activation via} \\ \text{Delta binding from}\\ \text{adjacent cells} } } - \underbrace{\mu_{1}N_{i}}_{ \substack{\text{NICD} \\ \text{degradation}} }, \\
\dot{D}_{i} &=\underbrace{ g\left( N_{i} \right) }_{ \substack{\text{Delta inhibition} \\ \text{by NICD}}}- \underbrace{\mu_{2}D_{i}}_{ \substack{\text{Delta} \\ \text{degradation}} },
\end{align}
where $f$ and $g$ are bounded increasing and decreasing functions respectively. These functions have the form,
\begin{equation}
f(x) = \frac{x^{k}}{a + x^{k}} \quad \text{and} \quad g(x) = \frac{1}{1+bx^{h}},
\end{equation}
where parameters $a,b,\mu_{1},\mu_{2}>0$ and hill coefficients $k,h\geq 1$. The subscript $i$ corresponds to cell identity within the system and the definition of the local spatial mechanism, $\langle D_{i} \rangle$, will be discussed in \autoref{Implementation section}.

\subsection{Graphical approach to cellular connectivity}
Rather than simply considering the discrete spatial system as an ODE network, we can represent the cellular connections as an undirected connected graph $\mathcal{G} = \mathcal{G}\left(V,E\right)$, where vertices $v \in V$ represent cells and edges $e \in E$ correspond to cellular connections, see \autoref{fig:example_weightings}. The vertices $v_{i}$ and $v_{j}$ representing cells $i$ and $j$ are considered to be connected if there exists an edge, $e_{i,j}$, between $v_{i}$ and $v_{j}$ such that $e_{i,j} \neq \emptyset$. Physically, we say that $e_{i,j} \neq \emptyset$ if the cell surfaces of cell $i$ and $j$ are touching. We represent the signal strength of cellular connectivity between cells $i$ and $j$ using nonnegative cell-type dependent weighting coefficients $w_{i,j}$. Namely, 
\begin{equation}
w_{i,j} = \left\{
	\begin{array}{ll}
		w_{1}  & \mbox{if } e_{i,j} \neq \emptyset \wedge \tau_{i} = \tau_{j}, \\
		w_{2} & \mbox{if }  e_{i,j} \neq \emptyset \wedge \tau_{i} \neq \tau_{j},  \\
		0 &  \mbox{if } e_{i,j} = \emptyset,
	\end{array}
\right.
\end{equation}
where $w_{1},w_{2} \in \R_{> 0}$ and cell-type of cell $i$ is denoted by $\tau_{i}$. Explicitly, let $w_{i,j} = 0$ if cells $i$ and $j$ are not connected, then if cells $i$ and $j$ are connected and of the same type, let $w_{i,j} = w_{1}$, and if cells $i$ and $j$ are connected and are different types we let $w_{i,j} = w_{2}$, see \autoref{fig:example_weightings}. Due to the symmetry of the undirected graph $\mathcal{G}$, we impose that $w_{i,j} = w_{j,i}$, which is physically realistic as cell-types are constant regardless of the perspective of the cell. The coefficients $w_{i,j}$ can used to mediate Delta transmission between adjacent cells dependent on cell-type and therefore will be the focus of our study.\par

We couple the notation of cellular connectivity with the lateral-inhibition model, as discussed in \autoref{sec:lateral_inhibition_model}, by constructing an \textit{adjacency matrix} for the system of $N$ cells (for $N \in 2 \N$ to account for bilayer geometries), $\bm{W} \in \R_{\geq 0}^{N\times N}$, using the cell-type weighting coefficients $w_{i,j}$ for each cell. This method of connectivity has been defined previously \cite{RufinoFerreira2013,Arcak2013}. Explicitly, we define the \textit{averaging} operator $\langle \cdot \rangle$ for the Delta transmission between adjacent cells in static geometries via
\begin{equation}\label{eqn:average_delta}
\langle \bm{D}(t) \rangle = \frac{1}{n_{1}w_{1} + n_{2}w_{2}}\bm{W}\bm{D}(t),
\end{equation}
where $\bm{D}(t)$ represents a vector of Delta concentrations for each cell in the system, $\bm{D}(t) = [D_{1}(t),...,D_{N}(t)]^{T}$, and $n_{1}$ corresponds to the number of cells of the same type adjacent to cell $i$, whereas $n_{2}$ is the number of cells adjacent of a different type, $n_{1},n_{2} \in \mathbb{N}$. The total cellular connectivity is defined as the total number of adjacent cells to each cell $i$, denoted by $N_{c} = n_{1} + n_{2}$. In addition, we introduce notation for the total scale weighting for each cell, $N_{w} = n_{1}w_{1} + n_{2}w_{2}$, for brevity. The inclusion of a scaling was introduced in \cite{Arcak2013} and enabled the direct comparison of cellular connectivity to a probability transition matrix of a reversible Markov chain, such that in each row $(1/N_{w})\sum_{j}w_{i,j} = 1$ for all $i$. \par

In this study, we make the assumption that there are only two cell types, basal cells and luminal cells, which are organised into separate layers. We consider cells to be located in either of the concentric layers, as seen in \autoref{fig:mammary_organoid_intro}a, and therefore we assume each static lattice is symmetric with periodic boundaries in 2D and 3D. Consequently, the matrix $\bm{W}$ can be decomposed into blocks of two smaller matrices $\bm{W}_{1},\bm{W}_{2} \in \R_{\geq  0}^{(N/2) \times (N/2)} $, such that $\bm{W}$ has the following form,
\begin{equation}\label{eqn:block_w}
\bm{W} = \left[
  \begin{array}{cc}
  \bm{W}_{1} & \bm{W}_{2} \\
   \bm{W}_{2} &  \bm{W}_{1} 
  \end{array}
\right].
\end{equation}
Where row $i$ of $\bm{W}_{1}$ represents the cellular connections of cell $i$ to adjacent cells of the same type, and the rows of  $\bm{W}_{2}$ corresponds to the cellular connections to cells of differing types. For example, for the standard orthoganal template, or Von Neumann neighbourhood, consisting of two layers of cells as shown in \autoref{fig:example_weightings}, has connectivity matrices,
\begin{equation}
\bm{W}_{1} =w_{1} \begin{bmatrix}
0 & 1 & 0 & 0 & \cdots & 0 & 1\\
1 & 0 & 1 & 0 & \cdots & 0 & 0 \\
0 & 1 & 0 & 1 & 0 & \cdots & 0 \\
\vdots & \ddots & \ddots & \ddots & \ddots & \ddots & \vdots \\
0 & \cdots & 0 & 1 & 0 & 1 & 0 \\
 0 & 0& \cdots & 0 & 1 & 0 & 1 \\
 1 & 0 & \cdots & 0 & 0 & 1 & 0 
\end{bmatrix} \quad \mbox{and} \quad \bm{W}_{2} =w_{2} \mathbb{1}_{N/2},
\end{equation}
where $\mathbb{1}_{N/2}$ is the identity matrix in $\left(N/2 \right)$-dimensions. The computational framework presented here enables the use of any symmetric and periodic static geometry of cellular systems that consist of representative cell types. 	We will present other possible connection geometries later (\autoref{fig:all connectivity diagrams}).

 \begin{figure}[H]
         \centering
         \hspace{1cm}
         \includegraphics[width=0.7\textwidth]{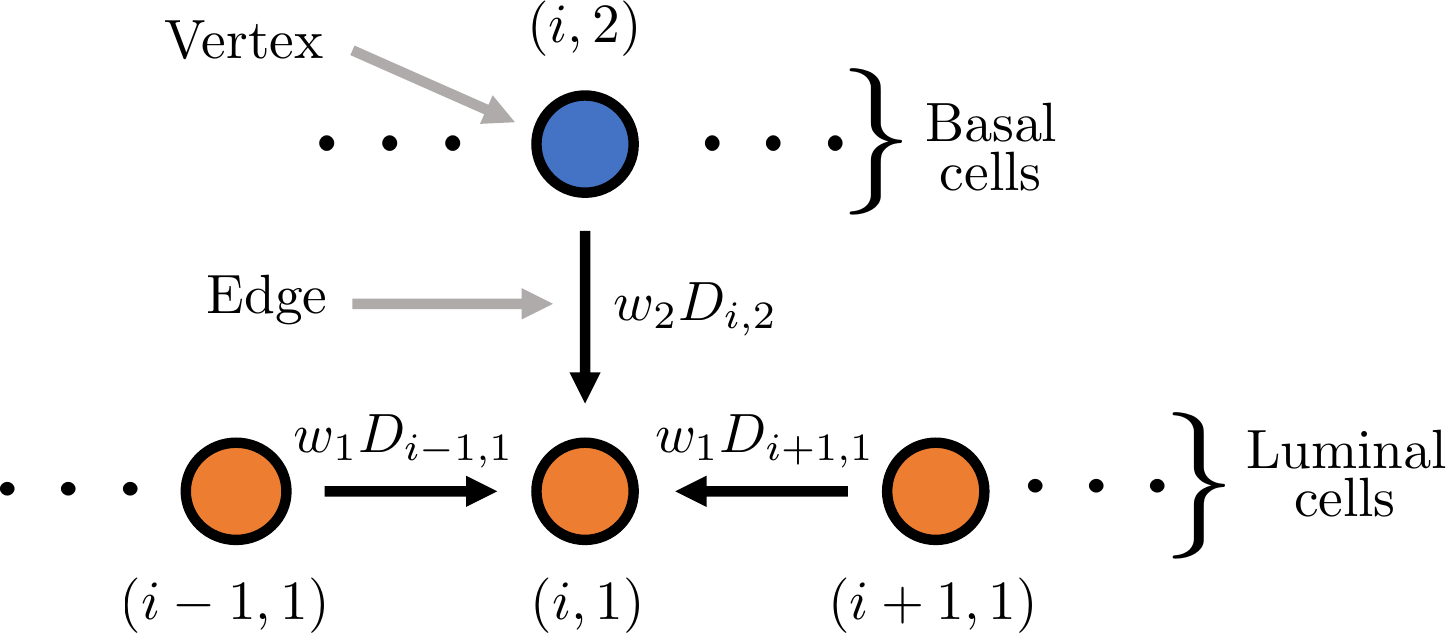}

 			\caption{An illustrative computational template for weighted averaging of Delta transmission from the perspective of luminal cell $i,1$ in a static lattice. In the present case we are considering a bilayer structure consisting of two cell-types. Thus, $j = 1,2$ and orange nodes identify luminal cells and blue nodes identify basal cells.}
        \label{fig:example_weightings}
\end{figure}

By representing cells as vertices in the connected graph $\mathcal{G}$, we are able to manipulate the geometry of the graph to investigate parameter regimes of $w_{1}$ and $w_{2}$, such that we obtain the desired patterning. Here, we investigate a variety of regular cyclic fixed lattices 2D and 3D. We assume that $e_{i,j} \neq \emptyset$ if cell $j$ lies within a circle (or sphere) of radius $\rho_{c}$ drawn around cell $i$, and the rest length of the lattice is unitary. The circle (or sphere) can be viewed as the cell membrane to which the Notch receptors and Delta ligands are anchored to. In addition, we introduce notation for the cell-type ratio for each cell, which is defined as,
\begin{equation}
R_{\tau} = \frac{\text{\# of adjacent cells of the same cell-type}}{\text{\# of adjacent cells of a different cell-type}} = \frac{n_{1}}{n_{2}},
\end{equation}
due to the symmetry of the domains, $R_{\tau}$ is homogeneous for all cells in the system. We chose three representative lattice structures in this study: (1) grid, (2) triangulated and (3) overlapped grid, to characterise the quasi-steady cellular configurations that may occur during the development mammary organoid. We then increase the connectivity radius, $\rho_{c}$, to obtain different neighbourhoods around each cell.\par
Focusing on grid domains, we examine two common cellular neighbourhoods used within the field of Cellular Automata. That is, taking $\rho_{c} = 1$ yields a Von Neumann neighbourhood, which is defined by a central node, surrounded by 4 other nodes in the north, east, south and west directions (\autoref{fig:grid_neighbourhoods}a) \cite{toffoli1987cellular}. Whereas increasing the connectivity spheres radius such that $\rho_{c} = \sqrt{2}$, we obtain a Moore neighbourhood, that includes the diagonal nodes missing from the Von Neumann neighbourhood (\autoref{fig:grid_neighbourhoods}b) \citep{toffoli1987cellular}. 

 \begin{figure}[h!]
         \centering
         \includegraphics[width=0.5\textwidth]{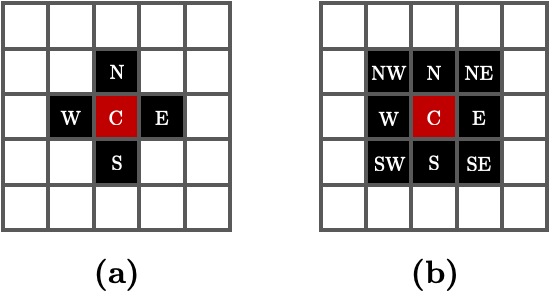}

 			\caption{A diagram of the cellular neighbourhoods defined by (a) Von Neumann and (b) Moore on a static grid lattice.}
        \label{fig:grid_neighbourhoods}
\end{figure}
In order to investigate the influence of the connection topology on emergence and stability of bilayer laminar patterning using a lateral-inhibition model, we used a variety of 2D and 3D geometries in both our analytical and numerical arguments. \autoref{fig:all connectivity diagrams} illustrates  the types of lattice geometry used, and a summary of all geometries can be found in \autoref{tab:geometry sum}.

 \begin{figure}[h!]
         \centering
         \includegraphics[width=0.8\textwidth]{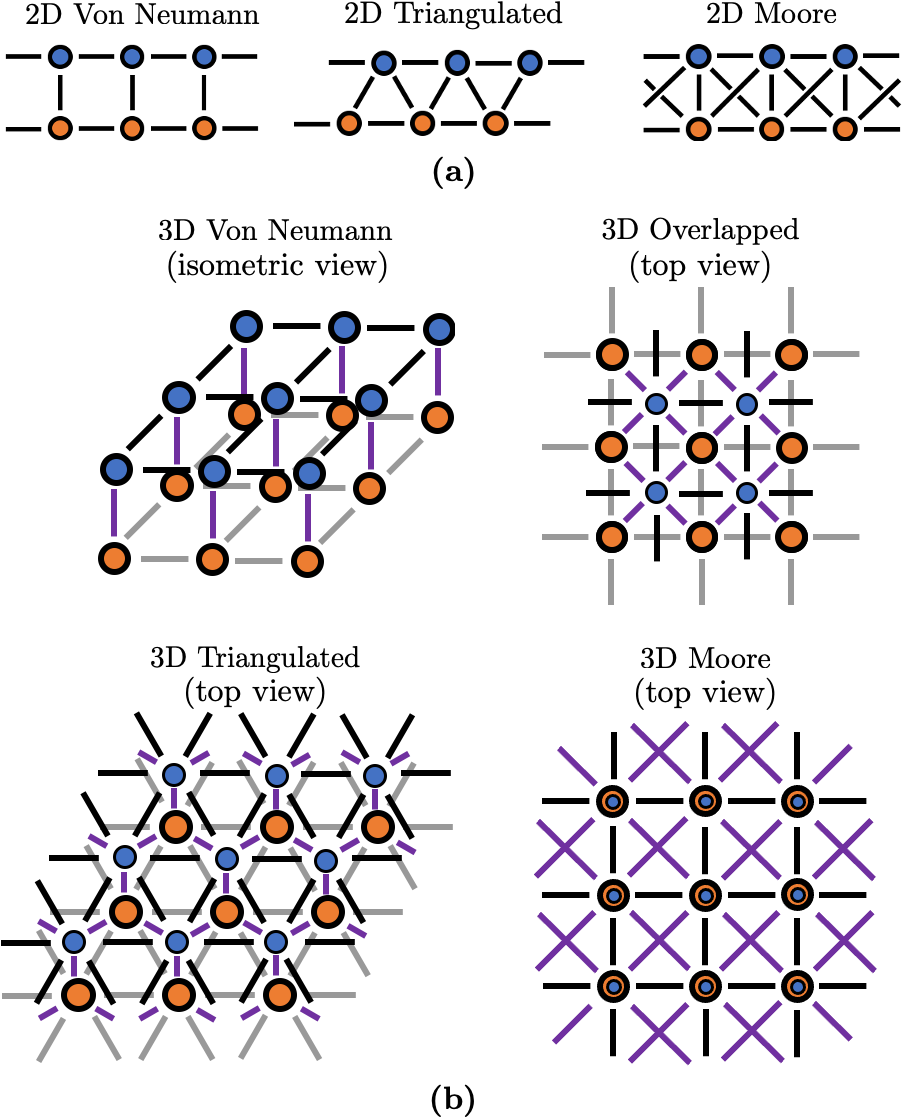}

 			\caption{Connectivity diagrams of the bilayer mammary organoid. Basal cells are shaded blue and luminal cells are shaded orange. (a) Representative diagrams of 2D geometries studied, solid black lines correspond to cellular connections. (b) Schematics for 3D lattices, grey lines correspond to connections between luminal cells, black lines are connections between basal cells and purple lines represent connections between the layers.}
        \label{fig:all connectivity diagrams}
\end{figure}

\begin{table}[h!]
\begin{center}
\resizebox{\linewidth}{!}{%
\begin{tabular}{ |c|c|c|c| }
 \hline
Lattice type & Connectivity radius, $\rho_{c}$ & Cellular connectivity, $N_{c}$ & Cell-type ratio, $R_{\tau}$   \\ 
\hline
2D Von Neumann (2DVN)  & 1 & 3 & 2 \\
2D Triangulated (2DT)& 1 & 4 & 1 \\
2D Moore (2DM)  & $\sqrt{2}$ & 5 & 2/3 \\
\hline
3D Von Neumann (3DVM)  & 1 & 5 & 4 \\
3D Overlapped (3DO1) & 1 & 8 & 1 \\
3D Triangulated (3DT)& 1 & 9 & 2 \\
3D Overlapped (3DO2)& $\sqrt{2}$ & 12 & 2 \\
3D Moore (3DM)  & $\sqrt{2}$ & 13 & 8/5 \\
\hline

\end{tabular}}
\caption{A summary of the lattice geometries in 2D and 3D that can be found in \autoref{fig:all connectivity diagrams} outlining the cellular neighbourhoods.}
\label{tab:geometry sum}
\end{center}
\end{table}

 \newpage

\subsection{Graph partitioning and network stability}\label{sec:graph_theory}

This study extends the work of Ferreira et al. (2013) where the symmetry of the cellular domain represented by a graph $\mathcal{G} = \mathcal{G}(V,E)$ was exploited to develop theoretical conditions on cellular connectivity for the existence and stability of inhomogeneous steady states in lateral-inhibition ODE models \cite{RufinoFerreira2013}. These methods were employed by considering contrasting pattern states of cells as partitions of the graph. A graph partition $P$ is a reduction of $\cG$ to a smaller graph, $\cG_{P}$, by partitioning the vertices, $v \in V$ into disjoint sets \cite{bollobas2013modern}. For example, each cell in $\cG$, represented by a vertex $v\in V$, can be collected into sets that converge to the same biochemical states, thus producing subsets of $V$ defining the graph partition $P$ (\autoref{fig:graphs}a). We say that the reduced graph $\cG_{P}$ is \textit{regular} if every vertex in $\cG_{P}$ has the same number of adjacent vertices. In addition, we introduce two properties of graphs, \textit{equitable} and \textit{bipartite}, that we use to derive conditions for the existence of contrasting states between partitions. 
\begin{defn}[Equitable graphs \cite{Godsil1997}]
For a graph to be \textit{equitable}, the subsets of vertices $V_{i} \subset V$ defining the graph partitions are themselves regular graphs, $\mathcal{G}_{i} = \mathcal{G}_{i}\left( V_{i}, E_{i} \right)$, and for any vertex $v_{i} \in V_{i}$, then $v_{i}$ has the same number of connections to vertices $v_{j} \in V_{j}$, independent of the choice of $v_{i}$ (\autoref{fig:graphs}b).
\end{defn}
\begin{defn}[Bipartite graphs \cite{bollobas2013modern}]
A graph $\mathcal{G}_{i}$ is \textit{bipartite} if $\mathcal{G}_{i}$ can be constructed by the union of two disjoint sets of vertices (\autoref{fig:graphs}c).
\end{defn}
Throughout this study, we assume the following:
\begin{assumption}
Each partition $P_{i}$ of the graph $\mathcal{G}$ is \textit{equitable}. 
\end{assumption}

\begin{assumption}
Each partition $P_{i}$ representing a reduced graph $\mathcal{G}_{i}$ of the graph $\mathcal{G}$ is \textit{bipartite}. 
\end{assumption}
\noindent By applying Assumptions 3.1-3.2 to partitioned graphs, Ferreira et al. (2013) was able to provide a framework for existence of \textit{template} patterns defined by the partitions of the vertices. The application of graph partitioning allowed for the reduction of the \textit{scaled adjacency matrix} $(1/N_{w})\bm{W} \in \R ^{N \times N}_{\geq 0}$ to a \textit{quotient matrix} $(1/N_{w})\overline{\bm{W}} \in \R ^{r \times r}_{\geq 0}$, which represents the connectivity of the partitions as proportional values between representative cells from each partition, and $r$ is the number of partitions of the graph $\mathcal{G}$. However, the dimension reduction of connectivity assumes that each cell within the same partition behaves identically.

 \begin{figure}[h!]
         \centering
         \includegraphics[width=0.7\textwidth]{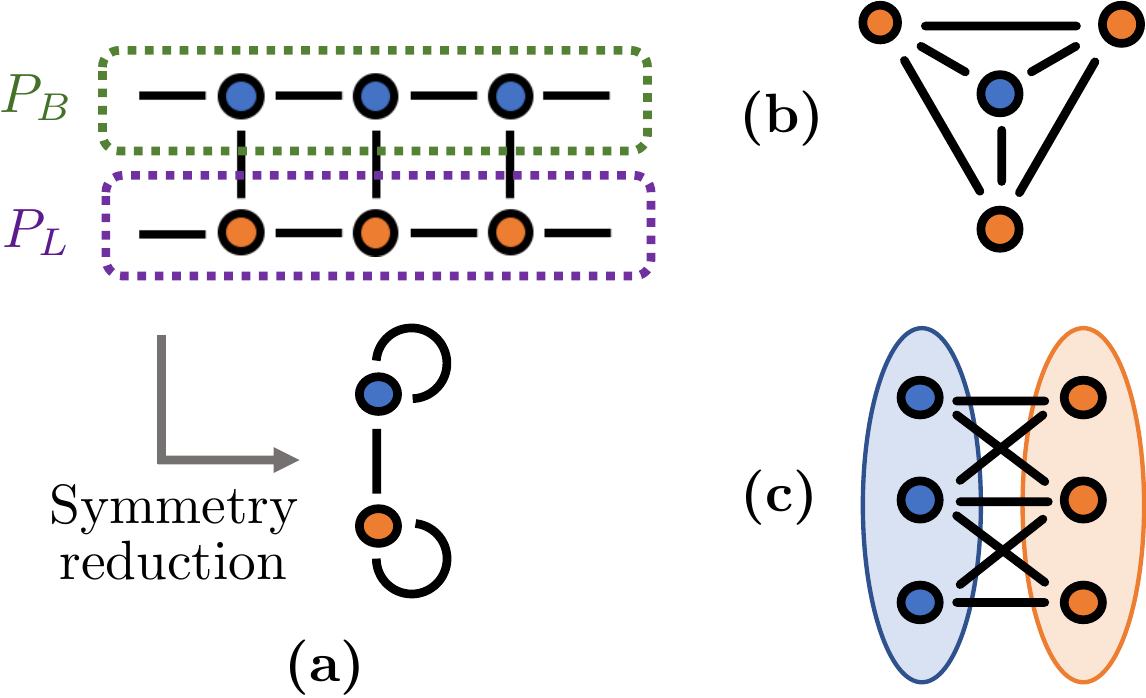}

 			\caption{Graph partitioning, symmetry reduction and example graphs. (a) A diagram representing the partitioning and symmetry reduction process for a bilayer of vertices. The graph defining the bilayer is partitioned into layers, such that basal cells lie in the partition $P_{B}$ and luminal cells lie in the partition $P_{L}$, that is $r = 2$, where $r$ is the dimension of the quotient space. By exploiting the symmetry of the graph, a reduction of vertices is made to consider only representative cells from each partition, $P_{B}$ and $P_{L}$. (b) An example graph with the equitable partitions property. The partitions are highlighted using colours, the diagram highlights that the blue node always has three connected orange nodes, and any orange node has only one blue node connected. (c) An example graph with the bipartite property. The full graph consisting of both blue and orange nodes can be decomposed into two disjoint sets of nodes highlighted by the shaded regions.}
        \label{fig:graphs}
\end{figure}

We consider each vertex representing a cell to contain a spatially discrete ODE system describing lateral-inhibition, these systems have previously been represented as single-input-single-output systems (SISO), a common representation of ODEs in control theory \cite{Arcak2013}. Formally, the SISO of a lateral feedback model has the form, for each cell $i$,
\begin{align}\label{SISO}
\dot{\bm{x}}_{i} &= \bm{f}\left(  \bm{x}_{i}, u_{i}   \right),\\
y_{i} &= h\left(\bm{x}_{i} \right),
\end{align}
where $\bm{x}_{i} \in \bm{X}$ is a vector of reactants (e.g. Notch-Delta), $u_{i} \in U$ is the input value to each cell, determined by the discrete spatial operator $\langle  \cdot \rangle$ and $y_{i} \in Y$ is the output of cell $i$ to its connected neighbours. The function $\bm{f}$ defines the nonlinear dynamics of the feedback model, and $h$ is the function defining the relationship between the ODEs and the output of the cell. It is assumed that both $\bm{f}$ and $h$ are continuously differentiable. Furthermore, we introduce the map called the characteristic transfer function $T: U \rightarrow Y$, which defines the transfer of flow of the dynamical system, 
\begin{equation}
T(\cdot) := h\left( S( \cdot ) \right),
\end{equation}
where $S: U \rightarrow X$ is a function mapping the information from connected cells to the dynamical system (3.9). It is assumed that $T$ is positive and bounded, and characteristically, $T$ is a decreasing function for lateral-inhibition and increasing for lateral-induction. For the nonlinear dynamics required to produce patterning via lateral-inhibition mechanisms, the characteristic transfer function, $T$, is generally algebraically intractable as it is constructed by the composition of nonlinear functions that define the systems dynamics. By linearising the SISO system (3.9-3.10) near points of interest in the spaces $\bm{X}, U$ and $Y$ allows for tractable analysis to investigate the stability of the SISO system (3.9-3.10). \par
A general linear SISO system has the form,
\begin{align}
\dot{\bm{x}}_{i} &= \bm{Ax}_{i} + \bm{B}u_{i},\\
y_{i} &= \bm{Cx}_{i},
\end{align}
for $\bm{A} \in \R^{n \times n}$, $\bm{B} \in \R^{n \times 1}$ and $\bm{C} \in \R^{1 \times n}$, where $n = \dim\left(\bm{x}_{i} \right)$. The derivative transfer function $T^{\prime}$ of a general SISO is analogous to the transfer matrix ${G}$ in linear SISO systems, and has the form,
\begin{equation}
{G}(s_{i}) = \bm{C} \left(s_{i} \mathbb{1}_{n} - \bm{A}\right)^{-1}\bm{B},
\end{equation}
as derived in \cite{Weiss1994}. It can be shown by linearising the SISO system for lateral feedback models (equations (3.9-3.10)), the dynamics of the transfer function $T^{\prime}$ can be approximated to linear form near steady state. That is,
\begin{equation}\label{eqn:transfer_der}
T^{\prime}(u_{s}) = - \left( \frac{\partial h}{\partial \bm{x}} \right) \left(  \frac{\partial \bm{f}}{\partial \bm{x}}    \right)^{-1} \left( \frac{\partial \bm{f}}{\partial u} \right)\biggr\rvert_{\bm{x} = \bm{x}_{s}}  = -\bm{C}\bm{A}^{-1}\bm{B} = G(0),
\end{equation}
as demonstrated in \cite{Arcak2013}, where $u_{s}$ is a steady state input for the SISO system for lateral feedback models.

A key property of SISO lateral-inhibition models is monotonicity. Monotone systems preserve the order of trajectories within respective nonempty subsets of \textit{Banach} spaces \cite{Angeli2003}. The trajectory spaces $K$ we consider have the following properties (\autoref{fig:cones}):
\begin{enumerate}
\item $K$ is a cone in Euclidean space, that is, $\alpha K \subset K $ for $\alpha \in \R_{\geq 0}$.
\item $K$ is convex, $K+K \subset K$.
\item $K$ is pointed, namely, $K  \cap \left( -K \right)  = \{0\}$.
\end{enumerate}

\noindent Given a cone $K$, we define partial ordering of elements via ``$\preceq$'' such that $x \preceq \hat{x}$ means that $\hat{x} - x \in K$  \cite{Arcak2013}. By embedding the trajectory spaces $\bm{X}, U$ and  $Y$, into cones $K^{\bm{X}}, K^{U}$ and $K^{Y}$, we formally define monotonicity of SISO systems (3.9-3.10). 

\begin{figure}[h!]
         \centering
         \includegraphics[width=0.4\textwidth]{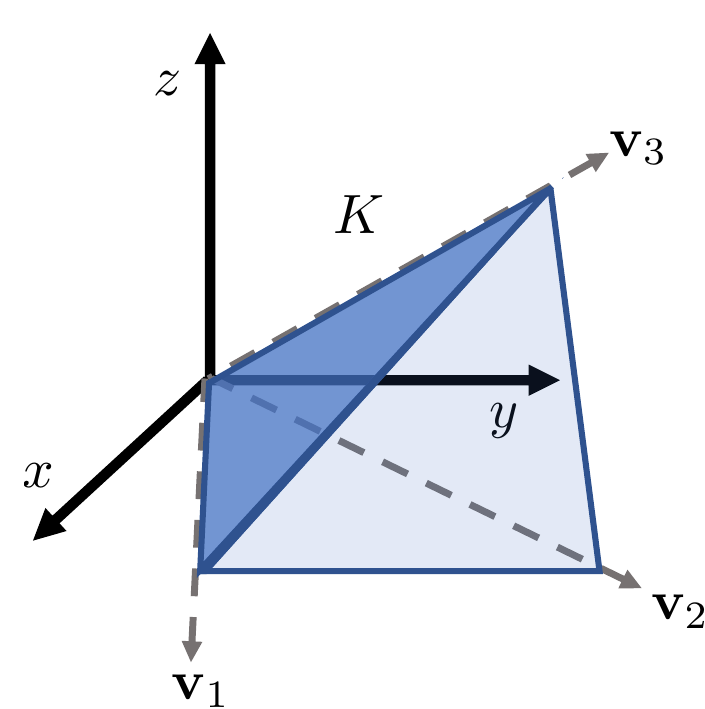}

 			\caption{An example pointed, convex cone $K$ in Euclidean space, where $K = \{ \bm{x} \in \R^{3} \, | \, \bm{x} =a_{1}\mathbf{v}_{1}+ a_{2}\mathbf{v}_{2}+ a_{3}\mathbf{v}_{3} \, \, \forall a_{i} \in \R_{\geq 0}  \}. $}
        \label{fig:cones}
\end{figure}

\begin{defn}[Monotone SISO system \cite{Angeli2003}]\label{def:mono}
Given the cones $K^{U},K^{Y},K^{\bm{X}}$ for the input, output and state spaces, the input-output ODE model $\dot{\bm{x}}_{i} = \bm{f}(\bm{x},u_{i}), \, y_{i} = h(\bm{x}_{i})$ is said to be monotone if $\bm{x}_{i}(0) \preceq \hat{\bm{x}}_{i}(0)$ and $u_{i}(t) \preceq \hat{u}_{i}(t)$ for all $t \geq 0$ imply that the resulting solutions satisfy $\bm{x}_{i}(t) \preceq \hat{\bm{x}}_{i}(t)$ for all $t \geq 0$, and the output map is such that $\bm{x}_{i} \preceq \hat{\bm{x}}_{i}$ implies $h(\bm{x}_{i}) \preceq h\left(\hat{\bm{x}}_{i}\right) $.
\end{defn} 
\noindent The monotonicity of SISO systems (3.9-3.10) have previously be used to investigate the stability of component-wise steady states \cite{Angeli2003,Arcak2013}.\par
In control theory, the stability of SISO systems (3.9-3.10) can be assessed by analysing the transition of inputs and output between components of the connected system. A particular measure of a connected system is the $\mathcal{L}_{2}$-gain, which is a nonnegative quantity which describes the response of a system to an input. We first provide a general definition of a $\mathcal{L}_{p}$-gain, $\hat{q}$.

\begin{defn}[$\mathcal{L}_{p}$-gain of a SISO system \cite{DerSchaft1996}]\label{def:L2}
The $\mathcal{L}_{p}$-gain, $\hat{q}_{i} > 0$,  of a SISO system (3.9-3.10) is defined by
\begin{equation}
\hat{q}_{i} = \sup_{T>0} \lp \frac{||u_{i}||_{p}}{||y_{i}||_{p}} \rp
\end{equation}
for all $y_{i}$ and $u_{i}$ for $i = 1,...,N$,  and T > 0 denotes the truncation of the Hilbert spaces for the input and outputs of the system, $\mathcal{L}_{p} \lp U \rp$ and $\mathcal{L}_{p} \lp Y \rp$, respectively.
\end{defn}

For monotone systems with asymptotically stable steady states, the  $\mathcal{L}_{2}$-gain, $q_{i}$ for $ i = 1,...,N$ (and the  $\mathcal{L}_{\infty}$-gain by Young’s inequality and gain relations \cite{doyle2013feedback}) are identical to the steady state gain (dc-gain), i.e. the ratio of the steady state output and steady input. That is,
\begin{equation}\label{eqn:l2_t}
q_{i} = ||G(0)||_{\infty} = |G(0)| = |-\bm{C}\bm{A}^{-1}\bm{B}|.
\end{equation}
For nonlinear systems, linearising around a desired point in the state space yields the same result \cite{Angeli2004}. Hence, the $\mathcal{L}_{2}$-gain (and $\mathcal{L}_{\infty}$-gain) can be obtained by $q_{i} = |T^{\prime}(u_{s,i})|$, where $u_{s,i}$ is the steady state of the input of cell $i$. Therefore, by demonstrating a nonlinear SISO system is monotone, we have a convenient procedure to compute the $\mathcal{L}_{2}$-gains for each cell in the system. \par 
The $\mathcal{L}_{2}$-gains of an interconnected SISO systems are particularly useful for understanding the stability of the feedback between the connected components. The Small Gain theorem yields a sufficient bound on two interconnected components for the global stability of feedback. 

\begin{theorem}[Small Gain theorem  \cite{DerSchaft1996}]\label{thm:small-gain}
For all bounded inputs, a SISO system (3.9-3.10) of two interconnected components, $c_{1}$ and $c_{2}$, in a closed-loop are locally asymptotically stable if $c_{1}$ and $c_{2}$ are independently stable and
\begin{equation}
q_{1}q_{2} < 1,
\end{equation}
where $q_{1}$ and $q_{2}$ are the $\mathcal{L}_{2}$-gains of $c_{1}$ and $c_{2}$, respectively.
\end{theorem}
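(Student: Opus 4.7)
The plan is to follow the standard input-output stability argument for feedback interconnections, working with truncated $\mathcal{L}_2$ signals so that the gain inequalities can be applied before we know a priori that the closed-loop signals are bounded. First I would write the closed-loop explicitly: introduce exogenous bounded reference inputs $r_1, r_2$ and decompose the signals as $e_1 = r_1 + y_2$, $e_2 = r_2 + y_1$, $y_1 = T_1(e_1)$, $y_2 = T_2(e_2)$, where $T_1, T_2$ are the input-output maps of $c_1$ and $c_2$. For any $T > 0$, apply the truncation operator (as in \autoref{def:L2}) and use that $\lVert y_{i,T}\rVert_2 \le q_i \lVert e_{i,T}\rVert_2$ together with the triangle inequality on the Hilbert space norm.

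Second, I would chain the two inequalities. Substituting the bound on $\lVert y_{2,T}\rVert_2$ into the bound on $\lVert y_{1,T}\rVert_2$ yields
\begin{equation}
(1 - q_1 q_2)\lVert y_{1,T}\rVert_2 \le q_1 \lVert r_{1,T}\rVert_2 + q_1 q_2 \lVert r_{2,T}\rVert_2,
\end{equation}
and the hypothesis $q_1 q_2 < 1$ makes the prefactor strictly positive, so we can divide to obtain a bound on $\lVert y_{1,T}\rVert_2$ uniform in $T$ whenever $r_1,r_2 \in \mathcal{L}_2$. The analogous bound holds for $y_2$. Sending $T \to \infty$ by monotone convergence of the truncated norms gives finite-gain $\mathcal{L}_2$ input-output stability of the closed loop.

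Third, I would convert this input-output bound into the stated \emph{local asymptotic stability} conclusion. Since each $c_i$ is assumed independently stable with an equilibrium $(\bm{x}_{s,i}, u_{s,i}, y_{s,i})$, linearise the SISO systems around the equilibrium as in equations (3.11)--(3.14), so that the gains $q_i$ become $|G_i(0)| = |T_i^\prime(u_{s,i})|$ by \autoref{eqn:l2_t}. Perturbations $\delta u_i, \delta y_i$ in the closed loop then satisfy the same chained inequality, and the resulting $\mathcal{L}_2$-bound on the closed-loop perturbation, combined with the internal asymptotic stability of each linearised component, forces $\delta y_i(t), \delta u_i(t) \to 0$ as $t \to \infty$ for initial conditions in a neighbourhood of the equilibrium, which is precisely local asymptotic stability.

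The main obstacle I anticipate is not the algebraic manipulation — that is essentially one inequality — but the passage from the $\mathcal{L}_2$ gain bound back to Lyapunov-type asymptotic stability of the equilibrium. One has to be careful that the linearisation is valid (continuous differentiability of $\bm{f}$ and $h$ stated after (3.10) handles this) and that the closed loop is well-posed so that truncation does not produce spurious solutions; in the nonlinear setting one may additionally need to invoke a small-signal argument, restricting the perturbation amplitude so that higher-order terms stay dominated by the linear gain bound. Citing \cite{DerSchaft1996, Angeli2004} for the technical well-posedness and the equivalence between $\mathcal{L}_2$-gain and dc-gain at the linearisation would keep the proof self-contained without reproducing those standard arguments.
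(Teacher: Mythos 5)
The paper contains no proof of this statement to compare against: Theorem 3.1 is imported verbatim as a cited result from van der Schaft \cite{DerSchaft1996}, and is used later (Section 4.2) only as a benchmark against which the relaxed stability condition is measured. So your proposal must be judged against the standard literature proof, and in outline it reproduces it correctly: close the loop with exogenous $\mathcal{L}_2$ inputs, apply the truncated gain inequalities together with the triangle inequality, chain them so that the prefactor $(1-q_1q_2)$ appears, divide using $q_1q_2<1$, and let the truncation horizon tend to infinity. The displayed chained inequality is correct, and you correctly identify that the genuine work is not this algebra but the conversion of finite-gain input-output stability into local asymptotic stability of the equilibrium.

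Two technical points deserve to be made explicit rather than deferred. First, the inequality $\lVert y_{i,T}\rVert_2 \le q_i \lVert e_{i,T}\rVert_2$ holds with zero bias only when each component starts at its equilibrium; for perturbed initial states — which is what local asymptotic stability is about — the gain inequality carries an additive constant $\beta_i$ depending on the initial condition, and this constant must be propagated through the chaining (it does so harmlessly, but omitting it makes the argument circular as a stability proof). Second, an $\mathcal{L}_2$ bound on the loop signals does not by itself force the internal states to converge; it is the hypothesis that $c_1$ and $c_2$ are \emph{independently stable} that lets you conclude $\bm{x}_i(t)\to\bm{x}_{s,i}$ from square-integrability of the inputs driving each component, and your sketch would be complete if it invoked that hypothesis at this precise point instead of citing \cite{DerSchaft1996,Angeli2004} wholesale. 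A last cosmetic remark: the paper's \autoref{def:L2} writes the gain as $\sup\lp\lVert u_i\rVert_p/\lVert y_i\rVert_p\rp$, with input over output; your usage (output bounded by gain times input) is the standard one and is what the theorem actually requires, so you have silently corrected what is evidently a typo in the paper's definition.
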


Using the information we have outlined in this section, we apply the following theorems derived by Ferreira et al. (2013), that focused on lateral-inhibition models to bilayer geometries in 2D and 3D \cite{RufinoFerreira2013}. The first theorem we consider provides an inequality relating the characteristics of the SISO dynamics to the geometry of the symmetric cellular domain that yields the instability of the homogeneous steady state for all cells. 
\begin{theorem}[Thm 4 in \cite{RufinoFerreira2013}]\label{theorem: reduced}
Let $\pi$ be an equitable partition of the vertices of $\mathcal{G}$ such that the reduced graph $\mathcal{G}_{\pi}$ is bipartite. Let $\overline{\lambda}_{r}$ the smallest eigenvalue of reduced quotient matrix $(1/N_{w})\overline{\bm{W}}$. If the output characteristic function, $T$, is positive, bounded and decreasing, and if for the homogeneous input steady state, $u^{*}$, we have, 
\begin{equation}\label{part_condition}
|T^{\prime}(u^{*})|\overline{\lambda}_{r} < -1,
\end{equation}
then there exists heterogeneous steady states in the representative partitions.
\end{theorem}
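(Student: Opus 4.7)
The plan is to exploit the equitable partition to reduce the $N$-cell dynamics to a quotient system on $r$ representative cells, then to use bipartiteness of $\mathcal{G}_{\pi}$ to turn that quotient system into a two-block feedback loop whose instability condition coincides with the violation of a small-gain inequality of the form in \autoref{thm:small-gain}, and finally to convert the instability of the homogeneous steady state into the existence of a heterogeneous one via a monotone-systems / fixed-point argument.

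First I would set up the reduction. Because $\pi$ is equitable, the characteristic matrix $C\in\mathbb{R}^{N\times r}$ of the partition (with $C_{ij}=1$ iff vertex $i$ lies in class $j$) intertwines the scaled adjacency matrix with its quotient, $(1/N_{w})\bm{W}C=(1/N_{w})C\overline{\bm{W}}$. Any trajectory of the SISO network (3.9--3.10) that is initially constant on each class therefore remains so for all time, and its evolution is governed by the reduced closed-loop system
\begin{equation}
\dot{\bm{x}}_{j}=\bm{f}\bigl(\bm{x}_{j},u_{j}\bigr),\qquad u_{j}=\tfrac{1}{N_{w}}\sum_{k}\overline{W}_{jk}\,h(\bm{x}_{k}),\qquad j=1,\dots,r.
\end{equation}
Steady states of this reduced system are in bijection with steady states of the full system that are constant on each class, so it suffices to produce a non-homogeneous fixed point of the quotient.

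Next I use the bipartiteness of $\mathcal{G}_{\pi}$. A bipartition of the $r$ classes into two blocks $A$ and $B$ forces the spectrum of $(1/N_{w})\overline{\bm{W}}$ to be symmetric about $0$, and, in particular, the smallest eigenvalue $\overline{\lambda}_{r}$ has a corresponding eigenvector $v$ whose entries take opposite signs on $A$ and $B$, i.e.\ a genuine "contrasting" mode. Linearising the reduced closed loop at the homogeneous steady state $u^{*}$ (where every cell sits at $u_{j}=u^{*}$) gives the characteristic map $\delta u\mapsto T'(u^{*})\,(1/N_{w})\overline{\bm{W}}\,\delta u$, whose eigenvalue along $v$ is $T'(u^{*})\overline{\lambda}_{r}$. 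Since $T$ is decreasing, $T'(u^{*})<0$, and the hypothesis $|T'(u^{*})|\overline{\lambda}_{r}<-1$ forces $\overline{\lambda}_{r}<0$ and $T'(u^{*})\overline{\lambda}_{r}>1$. Equivalently, projecting the reduced loop onto the span of $v$ produces a scalar two-component feedback between the $A$-side and $B$-side averages whose composed $\mathcal{L}_{2}$-gain at $u^{*}$, computed by \myeqref{eqn:l2_t}, equals $|T'(u^{*})||\overline{\lambda}_{r}|>1$, which is precisely the converse of the Small Gain bound in \autoref{thm:small-gain}. Hence $u^{*}$ is linearly unstable along $v$, with the unstable direction antisymmetric between the two halves of the bipartition.

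The last step, which I expect to be the main obstacle, is to promote this local instability to the existence of an actual heterogeneous fixed point rather than a mere neighbouring trajectory. The plan is to work on the compact invariant box $\mathcal{B}=[0,\sup T]^{r}$ (which is forward-invariant because $T$ is positive and bounded), and to re-order the $r$-component loop using the "competitive" cone $K=\mathbb{R}_{\geq 0}^{|A|}\times\mathbb{R}_{\leq 0}^{|B|}$ suggested by the bipartition. Under this partial order, each scalar block $-T$ becomes order-preserving and the composed closed-loop map on $\mathcal{B}$ becomes monotone in the sense of \autoref{def:mono}. Invoking the Dancer--Hess / Angeli--Sontag monotone comparison principle (of the kind used in \cite{Arcak2013,RufinoFerreira2013}), the existence of an unstable homogeneous fixed point between an upper and a lower solution on $\mathcal{B}$ forces the existence of at least one further fixed point; by construction this extra fixed point lies off the diagonal $\{u_{1}=\dots=u_{r}\}$, and the antisymmetric structure of the unstable eigenvector $v$ guarantees that it is split across the bipartition, i.e.\ heterogeneous between the two representative partitions, completing the proof.
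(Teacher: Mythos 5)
There is nothing in this paper to compare your argument against: the statement is imported verbatim as Theorem 4 of Ferreira and Arcak \cite{RufinoFerreira2013}, no proof is given, and the result is only ever used as a black box (in the proofs of \autoref{prop: bound} and \autoref{coro_stab}). Judged on its own merits, your proof is essentially sound, and it reconstructs the same monotone-systems strategy as the cited source. The equitable-partition intertwining $(1/N_{w})\bm{W}C=(1/N_{w})C\overline{\bm{W}}$ and the resulting quotient reduction are correct and standard; for a genuinely bipartite quotient the scaled quotient matrix is block anti-diagonal, so its spectrum is indeed symmetric and the smallest eigenvalue carries a sign-alternating eigenvector $v$; flipping signs on one side of the bipartition turns the decreasing characteristic $T$ into an interconnection that is monotone with respect to the cone $K=\mathbb{R}_{\geq0}^{|A|}\times\mathbb{R}_{\leq0}^{|B|}$ (this works precisely because bipartiteness removes all same-side coupling); and monotone iteration of $u\mapsto(1/N_{w})\overline{\bm{W}}T(u)$ on the compact box, started from the strict subsolution $u^{*}+\epsilon v$, converges to a fixed point dominating $u^{*}+\epsilon v$ in the $K$-order, which is therefore heterogeneous across the two sides.

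Three caveats. First, your appeal to the ``converse'' of the Small Gain theorem (\autoref{thm:small-gain}) is not a valid inference --- failure of a sufficient condition for stability proves nothing --- but this is cosmetic here, since your existence argument never actually uses dynamical instability, only expansivity of the steady-state map along $v$, which you get directly from the eigenvalue computation. Second, for the subsolution step to close you need $v$ strictly of one sign on each side, i.e.\ $v\in\operatorname{int}K$; this requires Perron--Frobenius applied to the irreducible products of the off-diagonal blocks, and irreducibility should be stated (it follows from connectedness of $\mathcal{G}$). Third, and most substantively: your proof, like the literal statement, requires the quotient $\mathcal{G}_{\pi}$ to have no within-side edges, yet the paper's own application in \autoref{prop: bound} uses the bilayer quotient with diagonal entries $n_{1}w_{1}>0$, which is not bipartite; there the cone trick fails outright, because same-type inhibition reverses the $K$-order ($\partial F_{A}/\partial u_{A}=p_{AA}T^{\prime}<0$). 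So your argument covers the theorem as stated but not the regime in which the paper actually invokes it; for that $2\times2$ quotient one needs a different existence mechanism, e.g.\ a fixed-point-index argument (the hypothesis makes $\det\left(\mathbb{1}_{2}-T^{\prime}(u^{*})(1/N_{w})\overline{\bm{W}}\right)<0$, so $u^{*}$ has index $-1$ while the degree on the invariant box is $+1$, forcing an off-diagonal fixed point).
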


The second main result of Ferreira et al. (2013) provided conditions for the stability of heterogeneous steady states via $\mathcal{L}_{2}$-gain conditions in symmetric cellular domains. By exploiting the symmetry of the domain, and therefore assuming each cell within the same partition behaves identically, the $\mathcal{L}_{2}$-gains for those cells will also be identical. Let  $\overline{\bm{Q}} = \text{diag} \{q_{1},...,q_{r}\}$ represent the $\mathcal{L}_{2}$-gains from each representative cell in each patterning partition. Before stating the result, we introduce some notation for the spectral properties of the graph. Let $\bm{A}$ be a square matrix with the set of eigenvalues $\sigma \left( \bm{A} \right)$. Then the spectral radius of $\bm{A}$ is defined by 
\begin{equation}
\rho \left( \bm{A} \right) = \max \{ |\lambda| : \lambda \in \sigma \left( \bm{A} \right) \}.
\end{equation}
Using the spectral properties of the connected graph defined by the cellular domain, the stability criterion for the heterogeneous steady states follows from a theorem proven in \cite{RufinoFerreira2013} which is stated as follows.
\begin{theorem}[Thm 10 in \cite{RufinoFerreira2013}]\label{thm: stability}
Consider the network as defined in \autoref{theorem: reduced}. The steady state pattern defined by heterogeneous steady states are locally asymptotically stable if 
\begin{equation}\label{spectral bound}
\rho\left( (1/N_{w})\overline{\bm{W}} \, \overline{\bm{Q}} \right) < 1,
\end{equation}
where $(1/N_{w})\overline{\bm{W}}$ is the \textit{reduced quotient matrix} and $\rho(\cdot)$ represents the spectral radius.
\end{theorem}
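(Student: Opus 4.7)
The plan is to reduce the full $N$-cell dynamics to the $r$-dimensional quotient system induced by the equitable partition $\pi$, linearise around the heterogeneous steady state, and then apply a multivariable version of the Small Gain Theorem (\autoref{thm:small-gain}) in which the spectral radius of the open-loop gain operator replaces the scalar product of gains.

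First I would verify that the subspace of state configurations which are constant on each partition block is forward-invariant under the SISO dynamics \eqref{SISO}. Equitability of $\pi$ guarantees that for any cell $v \in P_i$ the weighted average $(1/N_w)\sum_j w_{v,j} y_j$ depends only on the common output values $\overline{y}_1,\dots,\overline{y}_r$ of the other partition representatives, so cells in the same block receive identical inputs and evolve identically. The resulting $r$-dimensional projected system is governed by the reduced quotient matrix $(1/N_w)\overline{\bm{W}}$, and by \autoref{theorem: reduced} it possesses the heterogeneous fixed point whose stability we wish to analyse. Linearising the reduced closed loop around this fixed point, each representative cell is described by a local linear SISO triple with transfer function $G_i(s)$, and \eqref{eqn:l2_t} together with \autoref{def:L2} identifies its $\mathcal{L}_2$-gain as $q_i = |G_i(0)| = |T'(u_{s,i})|$. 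The relevant open-loop gain operator at zero frequency of the interconnected reduced system is therefore exactly $(1/N_w)\overline{\bm{W}}\,\overline{\bm{Q}}$.

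Next I would invoke the multivariable small-gain principle: the reduced closed loop is locally asymptotically stable provided the spectral radius of the open-loop gain operator is strictly less than one, which is the hypothesis \eqref{spectral bound}. Lifting the conclusion back to the full $N$-dimensional system is the final step: since the equitable projection intertwines the dynamics, the Jacobian of the full system at the heterogeneous equilibrium decomposes into the reduced block (whose stability has just been established) together with blocks acting on the directions transverse to the equitable subspace, i.e.\ perturbations that vary within a single partition. Monotonicity of the SISO (\autoref{def:mono}) together with the componentwise gain bound $q_i = |T'(u_{s,i})|$ ensures these intra-partition perturbations are contracted at the same rate, so local asymptotic stability transfers from the quotient to the full interconnection.

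The main obstacle will be this last step: making rigorous the claim that the spectral bound on the reduced matrix $(1/N_w)\overline{\bm{W}}\,\overline{\bm{Q}}$ also controls the modes \emph{transverse} to the equitable subspace. The key auxiliary fact, which I expect to extract from the arguments underlying \cite{Arcak2013,RufinoFerreira2013}, is that when $\bm{Q} = \mathrm{diag}\{q_1,\dots,q_N\}$ is constant on partition blocks, the spectrum of $(1/N_w)\bm{W}\,\bm{Q}$ on the full space is dominated by the spectrum of $(1/N_w)\overline{\bm{W}}\,\overline{\bm{Q}}$ on the quotient space. With this lemma in hand, the scalar Small Gain Theorem applied component-by-component to the transverse modes, combined with the reduced-system stability, yields the claim.
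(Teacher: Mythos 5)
Before comparing approaches, note a structural point: the paper contains no proof of \autoref{thm: stability} at all — it is imported verbatim as Theorem 10 of \cite{RufinoFerreira2013} (``the stability criterion \dots follows from a theorem proven in \cite{RufinoFerreira2013}''), so your proposal can only be measured against the strategy of that cited source. Measured that way, your outline points in the right direction: monotone SISO cells whose $\mathcal{L}_{2}$-gains equal their dc-gains, a spectral-radius condition on the gain-weighted interconnection, and a collapse to the quotient via equitability. Moreover, the auxiliary lemma you hope to extract is true and cleanly provable: let $\bm{P} \in \{0,1\}^{N \times r}$ be the indicator matrix of the partition; equitability gives $\bm{W}\bm{P} = \bm{P}\overline{\bm{W}}$, and block-constancy of the gains (which holds here because the heterogeneous steady state produced by \autoref{theorem: reduced} is a template pattern, identical within each partition) gives $\bm{Q}\bm{P} = \bm{P}\overline{\bm{Q}}$, whence
\begin{equation*}
\left( (1/N_{w})\bm{W}\bm{Q} \right)^{k}\bm{P} \;=\; \bm{P}\left( (1/N_{w})\overline{\bm{W}}\,\overline{\bm{Q}} \right)^{k} \quad \text{for all } k \in \N.
\end{equation*}
Since both matrices are nonnegative and $\bm{P}$ sends the all-ones vector of length $r$ to the all-ones vector of length $N$, the maximal row sums of the $k$-th powers coincide, and Gelfand's formula with the $\infty$-norm yields $\rho\left( (1/N_{w})\bm{W}\bm{Q} \right) = \rho\left( (1/N_{w})\overline{\bm{W}}\,\overline{\bm{Q}} \right)$: the spectral radii are \emph{equal}, not merely ordered.

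Two genuine gaps remain. First, the ``multivariable small-gain principle'' you invoke is not a free extension of \autoref{thm:small-gain}: that theorem covers two independently stable components with $q_{1}q_{2}<1$, whereas the quotient loop here has self-connections ($n_{1}w_{1} \neq 0$) and the condition \eqref{spectral bound} is strictly weaker than $q_{B}q_{L}<1$ — the paper itself stresses, after \autoref{coro_stab}, that \eqref{spectral bound} is a \emph{relaxation} of the Small Gain theorem, coinciding with it only when $n_{1}=0$. Passing from ``spectral radius of the dc-gain matrix below one'' to ``local asymptotic stability of the continuous-time interconnection'' is precisely the hard content of Theorem 10 in \cite{RufinoFerreira2013}; there it rests on small-gain results for interconnections of monotone systems (attractivity of the fixed point of the discrete iteration $u \mapsto (1/N_{w})\bm{W}\,T(u)$ implying stability of the closed loop), so invoking it as a known principle begs the question. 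Second, your lifting step is not salvageable as written: the transverse (intra-partition) perturbation directions are not scalar feedback loops, so there is no ``component-by-component'' scalar Small Gain theorem to apply to them, and monotonicity is an order-preservation property, not a contraction property — it does not show that intra-partition perturbations ``contract at the same rate.'' The repair is that, with the spectral-radius identity above, no separate transverse argument is needed: apply the full-space criterion $\rho\left( (1/N_{w})\bm{W}\bm{Q} \right)<1$ directly; or, if you prefer the decomposition, symmetry of $\bm{W}$ and block-constancy of the linearised cell data make the closed-loop Jacobian block-diagonalise over the range of $\bm{P}$ and its orthogonal complement, and the identity shows the transverse block is dominated by the quotient block. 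Either way the lemma does the work that your final paragraph leaves to hand-waving.
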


\noindent \autoref{theorem: reduced} and \autoref{thm: stability} form the basis of our analytical results, which focus on how cellular connectivity in stratified bilayer geometries can influence the polarisation requirements to maintain laminar pattern formation.

\section{Theoretical results}\label{sec:theoretical_results}
We now focus on deriving algebraic conditions on the cell-type dependent signal strength parameters $w_{1}, w_{2}$ that yield bilayer laminar patterning in lateral-inhibition models. By leveraging the pattern \textit{template} methods outlined in \autoref{sec:graph_theory} (\autoref{fig:graphs}a), we significantly reduce the complexity involved juxtacrine pattern analysis in multicellular systems to investigate the role of geometry and cellular connectivity in signal stratification.

\subsection[Existence of bilayer patterns]{Conditions on $w_{1}$-$w_{2}$ defining the existence of bilayer laminar pattern formation}\label{sec:exist}


By applying \autoref{theorem: reduced} to the refined geometry of a bilayer periodic lattice yields the following restriction on the parameters $w_{1}$ and $w_{2}$ for the existence of laminar patterning.

\begin{theorem}\label{prop: bound}
Let $\mathcal{G}$ be an undirected, cyclic, connected graph with equitable partitions $P_{L}$ and $P_{B}$, such that the reduced graphs $\mathcal{G}_{L}$ and $\mathcal{G}_{B}$ are bipartite, where $\mathcal{G}_{L}$ contains the set of vertices representing luminal cells and $\mathcal{G}_{B}$ contains the set of vertices representing basal cells. If the output characteristic function, $T$, is positive, bounded and decreasing, then there exists heterogeneous steady states between partitions $P_{L}$ and $P_{B}$ if,
\begin{equation}\label{eqn: existence_bound}
w_{1} <\left(  \frac{|T^{\prime}(u^{*})| - 1 }{|T^{\prime}(u^{*})|+1}  \right) \frac{w_{2}}{R_{\tau}},
\end{equation}
provided that $n_{1}w_{1} < n_{2}w_{2}$.
\end{theorem}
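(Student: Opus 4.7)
The plan is to specialise \autoref{theorem: reduced} (the Ferreira et al. result quoted above) to the bilayer geometry, where the equitable partition has exactly two classes $P_L$ and $P_B$. This immediately reduces the scaled adjacency matrix $(1/N_w)\bm{W}$ to a $2\times 2$ quotient matrix, for which the spectrum can be written down by hand, so the main work is a careful computation of $\overline{\lambda}_r$ followed by algebraic rearrangement of the inequality $|T^{\prime}(u^{*})|\,\overline{\lambda}_r < -1$.

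First I would write down the quotient matrix explicitly. For a representative cell in $P_L$ there are $n_1$ same-type neighbours each contributing weight $w_1$, and $n_2$ cross-type neighbours each contributing weight $w_2$; by the symmetry (and equitability) of the bilayer the same counts hold for a representative of $P_B$. Hence
\begin{equation*}
\tfrac{1}{N_w}\,\overline{\bm{W}} \;=\; \frac{1}{n_1 w_1 + n_2 w_2}
\begin{pmatrix} n_1 w_1 & n_2 w_2 \\ n_2 w_2 & n_1 w_1 \end{pmatrix}.
\end{equation*}
The second step is to read off its eigenvalues. Because the matrix is a symmetric circulant of size two, the eigenvectors are $(1,1)^{T}$ and $(1,-1)^{T}$, with eigenvalues $1$ and $(n_1 w_1 - n_2 w_2)/(n_1 w_1 + n_2 w_2)$ respectively. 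Under the hypothesis $n_1 w_1 < n_2 w_2$, the second eigenvalue is strictly negative, and so it is the smallest:
\begin{equation*}
\overline{\lambda}_r \;=\; \frac{n_1 w_1 - n_2 w_2}{n_1 w_1 + n_2 w_2} \;<\; 0.
\end{equation*}

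Third, I would substitute this into the existence condition \eqref{part_condition}. Since $|T^{\prime}(u^{*})| > 0$ and $\overline{\lambda}_r < 0$, multiplying through and clearing the positive denominator $n_1 w_1 + n_2 w_2$ gives
\begin{equation*}
|T^{\prime}(u^{*})|\,(n_2 w_2 - n_1 w_1) \;>\; n_1 w_1 + n_2 w_2,
\end{equation*}
which rearranges to $n_1 w_1 (|T^{\prime}(u^{*})| + 1) < n_2 w_2 (|T^{\prime}(u^{*})| - 1)$. Dividing by $n_1(|T^{\prime}(u^{*})|+1) > 0$ and using $R_\tau = n_1/n_2$ yields exactly \eqref{eqn: existence_bound}. \autoref{theorem: reduced} then delivers the heterogeneous steady states.

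The only conceptual subtlety, rather than a genuine obstacle, is the sign bookkeeping: \autoref{theorem: reduced} requires $\overline{\lambda}_r$ to be negative enough, so one must first justify that the polarised regime $n_1 w_1 < n_2 w_2$ places the relevant eigenvalue below zero, and then verify that the implicit factor $|T^{\prime}(u^{*})| - 1$ appearing after clearing denominators is indeed positive (otherwise the right-hand side of \eqref{eqn: existence_bound} is non-positive and no valid $w_1 > 0$ exists). This second point is automatic from the very condition being imposed, since $|T^{\prime}(u^{*})|\,\overline{\lambda}_r < -1$ with $|\overline{\lambda}_r| \le 1$ forces $|T^{\prime}(u^{*})| > 1$. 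Beyond that, the proof is essentially spectral bookkeeping, and the rest of the argument — existence of the heterogeneous fixed point — is inherited from \autoref{theorem: reduced} without further work.
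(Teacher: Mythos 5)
Your proposal is correct and follows essentially the same route as the paper's own proof: form the $2\times 2$ quotient matrix, read off its eigenvalues $1$ and $(n_1w_1-n_2w_2)/N_w$, use $n_1w_1<n_2w_2$ to identify the latter as the (negative) smallest eigenvalue $\overline{\lambda}_r$, and rearrange $|T^{\prime}(u^{*})|\,\overline{\lambda}_r<-1$ into the stated bound. Your additional sign bookkeeping (noting $|T^{\prime}(u^{*})|>1$ is forced, so the right-hand side is positive) is a nice touch the paper leaves implicit, but it does not change the argument.
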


\begin{proof}
Let $(1/N_{w})\bm{W}$ be the general \textit{scaled adjacency matrix} for a bilayer of cells consisting of stratified cell-types, as seen in \autoref{fig:all connectivity diagrams}. Due to the symmetry of the connected graph, $\bm{W}$ is of the block form, as in equation\autoref{eqn:block_w}. Let $N \in 2\mathbb{N}$ be the total number of cells in the bilayer, then for each row $i \in [1,N/2]$, there is exactly $n_{1}$ and $n_{2}$ non-zero entries in $\bm{W}_{1}$ and $\bm{W}_{2}$, respectively. Therefore, the \textit{reduced quotient matrix} $(1/N_{w}) \overline{\bm{W}}$, depicting the quotient graph in \autoref{fig:graphs}a, which can be regarded as proportional connectivity, has the following form,
\begin{equation}
(1/N_{w}) \overline{\bm{W}} =\frac{1}{N_{w}} \left[
  \begin{array}{cc}
  n_{1}w_{1} & n_{2}w_{2}\\
    n_{2}w_{2} &  n_{1}w_{1}
   \end{array}
\right].
\end{equation}
The eigenvalues of $(1/N_{w}) \overline{\bm{W}}$ can be directly determined as,
\begin{equation}
\lambda_{1} = 1 \qquad \mbox{and} \qquad \lambda_{2} = \frac{n_{1}w_{1} - n_{2}w_{2}}{N_{w}},
\end{equation} 
and therefore $\lambda_{2} < \lambda_{1}$ for any connectivity graph in bilayer geometries. Hence, by the assumption $n_{1}w_{1} < n_{2}w_{2}$, we deduce that $\lambda_{2} = \overline{\lambda}_{r}$ in \autoref{theorem: reduced}. Applying \autoref{theorem: reduced} to the bilayer geometry, we substitute $\lambda_{2} $ into inequality\autoref{part_condition},
\begin{equation}
|T^{\prime}(u^{*})|\left(  \frac{n_{1}w_{1} - n_{2}w_{2}}{n_{1}w_{1} + n_{2}w_{2}}    \right) < -1,
\end{equation}
which can be rearranged to yield inequality\autoref{eqn: existence_bound}.
\end{proof}

Inequality\autoref{eqn: existence_bound} bounds the cell-type dependent signal strength and highlights the influence of cellular connectivity on pattern formation via the $R_{\tau}$ value. As $n_{1}$ increases, this implies $R_{\tau}$ increases, thus pattern existence is restricted by inequality \autoref{eqn: existence_bound}, requiring greater signal polarisation to cells of a different type (the converse is also true).

\subsection[Stability of bilayer patterns]{Conditions on $w_{1}$-$w_{2}$ defining the stability of bilayer laminar pattern formation}\label{sec:stab}

By applying \autoref{thm: stability} to the refined geometry of a bilayer periodic lattice yields the following restriction on the parameters $w_{1}$ and $w_{2}$, via the SISO system $\mathcal{L}_{2}$-gains, where these restrictions can be stated explicitly in particular cases.  That is,  the $\mathcal{L}_{2}$-gains, $q_{i}$ are sufficiently bounded for the stability of heterogeneous steady states $\bm{x}_{B}$ and $\bm{x}_{L}$ which define the contrasting signal expression in each partition, $P_{B}$ and $P_{L}$, respectively.

\begin{theorem}\label{coro_stab}
Let $\mathcal{G}$ be an undirected, cyclic, connected graph with equitable partitions, $P_{L}$ and $P_{B}$, such that the reduced graphs $\mathcal{G}_{L}$ and $\mathcal{G}_{B}$ are bipartite,  where $\mathcal{G}_{L}$ contains the set of vertices representing luminal cells and $\mathcal{G}_{B}$ contains the set of vertices representing basal cells. Then if the $\mathcal{L}_{2}$-gains of the representative cells, $q_{B}$ and $q_{L}$,  satisfy
\begin{equation}\label{eqn_full_stab_bound}
\frac{n_{1}w_{1} \lp q_{B}+q_{L} \rp + \sqrt{ \lp  n_{1}w_{1} \lp q_{B} - q_{L}\rp   \rp^{2} + \lp2 n_{2}w_{2} \sqrt{q_{B} q_{L} }    \rp^{2} }}{2 \lp n_{1}w_{1} + n_{2}w_{2} \rp} <1
\end{equation}
then the heterogeneous steady states $\bm{x}_{B}$ and $\bm{x}_{L}$ associated with $q_{B}$ and $q_{L}$ are locally asymptotically stable.  Moreover,  if 
\begin{equation}\label{eqn:linear_condition}
w_{1} < \lp\frac{2}{q_{B}+q_{L} - 2} \rp \frac{w_{2}}{R_{\tau}}
\end{equation}
then the local asymptotic stability of the heterogeneous steady states can be ensured by the following bounds on $w_{1}$:

\begin{equation}\label{eqn:explicit_stab}
w_{1} < \left\{
	\begin{array}{ll}
		S_{+}\lp q_{B},q_{L} \rp \frac{w_{2}}{R_{\tau}} & \mbox{if } 1 +q_{B}q_{L} < q_{B} + q_{L}, \\
		\lp \frac{q_{B}q_{L}}{q_{B}q_{L} - 1} \rp \frac{w_{2}}{R_{\tau}} & \mbox{if }  1 +q_{B}q_{L} = q_{B} + q_{L},  \\
		S_{-}\lp q_{B},q_{L} \rp \frac{w_{2}}{R_{\tau}} & \mbox{if } 1 +q_{B}q_{L} > q_{B} + q_{L},
	\end{array}
\right.
\end{equation}
where 
\begin{equation}
S_{\pm} \lp q_{B}, q_{L} \rp =  \frac{2 - \lp  q_{B} +q_{L} \rp \pm \sqrt{ \lp 4q_{L} -3  \rp q_{B}^{2} + \lp 4q_{L}^{2} -10 q_{L} + 4 \rp q_{B} +\lp 4 - 3q_{L} \rp q_{L}       } }  {2 \lp q_{B}q_{L} -q_{B} -q_{L} +1 \rp} .
\end{equation}

\end{theorem}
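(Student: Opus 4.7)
My plan is to apply \autoref{thm: stability} directly to the $2\times 2$ block structure that already appeared in the proof of \autoref{prop: bound}. Under the equitable bipartite partition of $\mathcal{G}$ into $P_B$ and $P_L$, the reduced quotient matrix is
\begin{equation*}
(1/N_w)\overline{\bm{W}} = \frac{1}{N_w}\begin{bmatrix} n_1 w_1 & n_2 w_2 \\ n_2 w_2 & n_1 w_1 \end{bmatrix},
\end{equation*}
and the gain matrix is $\overline{\bm{Q}} = \text{diag}(q_B, q_L)$. The first step is to compute the product $(1/N_w)\overline{\bm{W}}\,\overline{\bm{Q}}$ and apply the $2\times 2$ quadratic formula: the trace is $n_1 w_1(q_B+q_L)/N_w$ and the determinant is $q_B q_L(n_1^2 w_1^2 - n_2^2 w_2^2)/N_w^2$. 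Because every entry of $\overline{\bm{W}}\,\overline{\bm{Q}}$ is nonnegative, the spectral radius is the larger eigenvalue, and after combining terms under the square root it coincides exactly with the left-hand side of \autoref{eqn_full_stab_bound}. Demanding $\rho<1$ per \autoref{thm: stability} then proves the first assertion.

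For the explicit bounds \autoref{eqn:explicit_stab}, I would rearrange $\rho<1$ by isolating the square root,
\begin{equation*}
\sqrt{n_1^2 w_1^2 (q_B-q_L)^2 + 4 n_2^2 w_2^2 q_B q_L} < n_1 w_1 (2 - q_B - q_L) + 2 n_2 w_2.
\end{equation*}
Squaring is permissible only when the right-hand side is strictly positive, and that positivity is exactly the linearity condition \autoref{eqn:linear_condition}. Once squared, the factorisation $(q_B-q_L)^2 - (2-q_B-q_L)^2 = -4(q_B-1)(q_L-1)$ collapses the difference of squared terms and the inequality becomes a quadratic inequality in $w_1$ whose leading coefficient is proportional to $(q_B-1)(q_L-1) = q_B q_L - q_B - q_L + 1$.

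The three cases of \autoref{eqn:explicit_stab} then arise from the sign of that leading coefficient. If $1+q_B q_L = q_B+q_L$ the quadratic coefficient vanishes, the inequality collapses to a linear one and yields the middle case directly. If $1+q_B q_L < q_B + q_L$ the parabola is concave, the quadratic inequality is satisfied strictly between the two real roots, and the relevant upper bound on $w_1/w_2$ is the larger root, which becomes $S_+$. If $1+q_B q_L > q_B + q_L$ the parabola is convex, the inequality holds outside the roots, and the positive cutoff compatible with the linearity condition is the smaller root $S_-$. The closed-form expression for $S_\pm$ then follows from routine application of the quadratic formula and simplification of the discriminant. The main obstacle I anticipate is the bookkeeping between the three case splits: determining in each regime which branch of the quadratic formula furnishes the physically relevant positive bound on $w_1$, and confirming throughout that the chosen branch lies in the region where \autoref{eqn:linear_condition} permits squaring, so that the equivalence with $\rho<1$ is genuine rather than one-directional.
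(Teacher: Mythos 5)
Your route is the same as the paper's: form $(1/N_{w})\overline{\bm{W}}\,\overline{\bm{Q}}$, use Perron's theorem to identify the spectral radius of this nonnegative matrix with its larger eigenvalue, apply \autoref{thm: stability} to obtain inequality\autoref{eqn_full_stab_bound}, then isolate the square root, observe that positivity of the right-hand side is exactly inequality\autoref{eqn:linear_condition} so that squaring is sign-preserving, and reduce to a quadratic inequality in $w_{1}$. Your trace/determinant computation, the squaring condition, and the identity $(q_{B}-q_{L})^{2}-(2-q_{B}-q_{L})^{2}=-4(q_{B}-1)(q_{L}-1)$ are all correct, and this portion matches the paper's proof step for step; in particular the first assertion of the theorem is fully established by your argument.

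The gap is the final step, which you dismiss as ``routine application of the quadratic formula'': carried out correctly, it does not terminate in the stated bounds\autoref{eqn:explicit_stab}. Your own (correct) expansion yields
\begin{equation*}
n_{1}^{2}\left(q_{B}+q_{L}-q_{B}q_{L}-1\right)w_{1}^{2}+n_{1}n_{2}w_{2}\left(q_{B}+q_{L}-2\right)w_{1}+\left(q_{B}q_{L}-1\right)\left(n_{2}w_{2}\right)^{2}<0,
\end{equation*}
with constant term $(q_{B}q_{L}-1)(n_{2}w_{2})^{2}$, whereas the paper's proof records it as $(q_{B}+q_{L}-1)(n_{2}w_{2})^{2}$; the discriminant inside $S_{\pm}$, namely $(4q_{L}-3)q_{B}^{2}+(4q_{L}^{2}-10q_{L}+4)q_{B}+(4-3q_{L})q_{L}$, expands to $(q_{B}+q_{L}-2)^{2}+4(q_{B}-1)(q_{L}-1)(q_{B}+q_{L}-1)$ and is therefore consistent only with the paper's constant term, not with the correct one. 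With the correct constant term the quadratic degenerates: it factors as
\begin{equation*}
\left(n_{1}w_{1}+n_{2}w_{2}\right)\left(-(q_{B}-1)(q_{L}-1)\,n_{1}w_{1}+(q_{B}q_{L}-1)\,n_{2}w_{2}\right)<0,
\end{equation*}
its discriminant being the perfect square $\left(2q_{B}q_{L}-q_{B}-q_{L}\right)^{2}$, and since $N_{w}>0$ the condition collapses to the single linear inequality $(q_{B}q_{L}-1)\,n_{2}w_{2}<(q_{B}-1)(q_{L}-1)\,n_{1}w_{1}$. Consequently the three-case structure you anticipate (sign of the leading coefficient selecting $S_{+}$, the linear case, or $S_{-}$) never materialises, and your plan can only reproduce\autoref{eqn:explicit_stab} as stated by repeating the paper's algebraic slip in the constant term. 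What your outline genuinely proves is the first assertion together with this corrected, linear form of the explicit stability bound; as a sanity check, setting $n_{1}=0$ in the linear form recovers the Small Gain condition $q_{B}q_{L}<1$, exactly as the paper's later discussion demands.
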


\begin{proof}

Let $(1/N_{w})\overline{\bm{W}}$ to be the \textit{reduced quotient matrix} for a cell-type stratified bilayer of cells as defined in the proof of \autoref{prop: bound}. Let $a = n_{1}w_{1}/N_{w}$ and $b = n_{2}w_{2}/N_{w}$ for brevity. Therefore we have,
\begin{equation}
(1/N_{w})\overline{\bm{W}}\, \overline{\bm{Q}} = \left[
  \begin{array}{cc}
  aq_{B} & bq_{L}\\
    bq_{B} &  aq_{L}
   \end{array}
\right].
\end{equation}
As $(1/N_{w})\overline{\bm{W}} \, \overline{\bm{Q}}$ is a nonnegative matrix, by Perron's theorem \cite{Chang2008}, $\rho\left( (1/N_{w})\overline{\bm{W}} \, \overline{\bm{Q}} \right)$ is an eigenvalue of $(1/N_{w})\overline{\bm{W}}\, \overline{\bm{Q}}$, which is real due to the positivity of the matrix. By directly solving for the eigenvalues of $(1/N_{w})\overline{\bm{W}} \, \overline{\bm{Q}}$, we have,
\begin{equation}
\rho\left((1/N_{w})\overline{\bm{W}} \, \overline{\bm{Q}}   \right) = \frac{a\left( q_{B} + q_{L} \right)}{2} + \frac{\sqrt{a^{2} \left(q_{B} - q_{L}  \right)^{2} + 4 b^{2}q_{B}q_{L}  }}{2},
\end{equation}
and so by applying \autoref{thm: stability} yields inequality\autoref{eqn_full_stab_bound} for the local asymptotic stability of the heterogeneous states.  If inequality\autoref{eqn:linear_condition} holds,  then we have that
\begin{equation}\label{eqn:pos_bound}
\sqrt{a^{2} \left(q_{B} - q_{L}  \right)^{2} + 4 b^{2}q_{B}q_{L}  } < 2 - a \lp q_{B}+q_{L} \rp
\end{equation}
is positive,  which preserves the sign when taking the squares of both sides of inequality\autoref{eqn:pos_bound}.  Rearranging the bound yields,
\begin{equation}\label{inequal:bound}
a^{2}\left(q_{B} - q_{L}\right)^{2} + 4b^{2} q_{B}q_{L} < \left( 2- a\left( q_{B} + q_{L} \right)       \right)^{2}.
\end{equation}
Expanding the expressions in equation\autoref{inequal:bound} and rearranging gives,
\begin{equation}\label{ineq:final_bound}
q_{B}q_{L} \left( b^{2} - a^{2} \right) + a \left( q_{B} + q_{L} \right) < 1.
\end{equation}
Substituting $a$ and $b$ back into equation\autoref{ineq:final_bound} and yields a quadratic form in $w_{1}$ (and $w_{2}$)
\begin{equation}
n_{1}^{2}\lp q_{B} +q_{L} - q_{B}q_{L}-1 \rp w_{1}^{2} + n_{1}n_{2}w_{2} \lp q_{B} + q_{L} - 2 \rp w_{1} + \lp q_{B} + q_{L} - 1 \rp \lp n_{2} w_{2} \rp ^{2} <0
\end{equation} 
which can be directly solved provided the know information of the sign of the parabola.
\end{proof}

\begin{corollary}
If the homogeneous steady state $u^{*}$ of a monotone SISO system\autoref{SISO} yields $|T^{\prime} \lp  u^{*}\rp | \geq 3$,  then inequality\autoref{eqn:linear_condition} is always satisfied. 
\end{corollary}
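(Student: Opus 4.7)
The plan is to show that, whenever the heterogeneous steady states of \autoref{prop: bound} exist, the linear condition \eqref{eqn:linear_condition} follows automatically under the hypothesis $|T'(u^*)| \geq 3$, so that the explicit bounds \eqref{eqn:explicit_stab} apply without any further restriction on $(w_1, w_2)$. In other words, I want to establish the implication ``existence $\Rightarrow$ linear'' in the regime $|T'(u^*)| \geq 3$.

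First, I would recast both \eqref{eqn: existence_bound} and \eqref{eqn:linear_condition} as lower bounds on the same dimensionless ratio $w_2/(R_\tau w_1)$. \autoref{prop: bound} yields $w_2/(R_\tau w_1) > (|T'(u^*)| + 1)/(|T'(u^*)| - 1)$, while \eqref{eqn:linear_condition} rearranges to $w_2/(R_\tau w_1) > (q_B + q_L - 2)/2$ whenever $q_B + q_L > 2$ (the sub-threshold case makes \eqref{eqn:linear_condition} trivially non-binding, since its right-hand side is then non-positive and the positivity step inside the proof of \autoref{coro_stab} is automatic). The corollary thus reduces to the purely algebraic sum-gain inequality
\[
q_B + q_L \;\leq\; \frac{4\,|T'(u^*)|}{|T'(u^*)| - 1}.
\]

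Second, I would use \eqref{eqn:transfer_der} to write the patterned-state $\mathcal{L}_2$-gains as $q_B = |T'(u_{s,B})|$ and $q_L = |T'(u_{s,L})|$. Since $T$ is positive, bounded and decreasing, $|T'|$ attains its largest value at the unstable homogeneous fixed point $u^*$, giving the pointwise bound $q_B, q_L \leq |T'(u^*)|$ and hence $q_B + q_L \leq 2|T'(u^*)|$. At the threshold $|T'(u^*)| = 3$ this pointwise estimate matches the target exactly, since $2\cdot 3 = 6 = 4\cdot 3/(3-1)$; so the boundary case of the corollary is immediate.

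Third, for the strict case $|T'(u^*)| > 3$ the target $4|T'(u^*)|/(|T'(u^*)| - 1)$ falls strictly below $2|T'(u^*)|$, so the pointwise estimate alone is no longer sharp. I would close the remaining gap by combining $q_B, q_L \leq |T'(u^*)|$ with the small-gain inequality $q_B q_L < 1$ from \autoref{thm:small-gain}, applied to the period-2 orbit $u_{s,B} = T(u_{s,L})$, $u_{s,L} = T(u_{s,B})$ of the composed map $T \circ T$, and with the observation that the patterned inputs retreat deeper into the tails of $T$ as $|T'(u^*)|$ grows past $3$. The main obstacle is precisely this final tightening: neither monotonicity alone nor $q_B q_L < 1$ alone delivers the required sum bound once $|T'(u^*)| > 3$, so the argument must carefully leverage how rapidly $|T'|$ decays away from $u^*$ along the bifurcating branch, which is natural for the Collier-type sigmoidal characteristic but must be pinned down inside the standing hypotheses of \autoref{coro_stab} without smuggling in extra regularity assumptions on $T$.
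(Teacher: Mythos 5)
Your opening reduction is sound and matches what the corollary needs: writing \eqref{eqn: existence_bound} as $w_1 R_\tau/w_2 < (|T'(u^*)|-1)/(|T'(u^*)|+1)$ and \eqref{eqn:linear_condition} as $w_1 R_\tau/w_2 < 2/(q_B+q_L-2)$, the claim reduces to the sum-gain inequality $q_B+q_L \le 4|T'(u^*)|/(|T'(u^*)|-1)$. The genuine gap is the tool you bring to bear on it: the assertion that $|T'|$ attains its largest value at the homogeneous fixed point, so that $q_B, q_L \le |T'(u^*)|$. Nothing in the standing hypotheses ($T$ positive, bounded, decreasing) places the maximum of $|T'|$ at $u^*$; the point $u^*$ is a fixed point of the input--output relation, not a critical point of its derivative. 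More importantly, the paper's own proof opens with the \emph{opposite} ordering, $q_L < |T'(u^*)| < q_B$, taken w.l.o.g. on the grounds that the patterned inputs $u_B$ and $u_L$ diverge from $u^*$ in opposite directions, so that the gain of one representative cell exceeds the homogeneous gain. Under that ordering your pointwise estimate fails for $q_B$, and the only case your argument actually closes --- the boundary case $|T'(u^*)|=3$, where $2|T'(u^*)| = 6 = 4|T'(u^*)|/(|T'(u^*)|-1)$ --- collapses with it.

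The patch you sketch for $|T'(u^*)|>3$ is circular as well as admittedly incomplete. \autoref{thm:small-gain} concludes stability \emph{from} the hypothesis $q_1 q_2<1$; it does not supply $q_B q_L<1$ as a property of the heterogeneous steady states. Since \eqref{eqn:linear_condition} is exactly the hypothesis under which \autoref{coro_stab} derives the explicit stability bounds \eqref{eqn:explicit_stab}, importing a stability-type gain-product bound in order to prove \eqref{eqn:linear_condition} runs the section's logic backwards. The paper's proof goes a different way and never bounds $q_B+q_L$ from above at all: with $q_L<|T'(u^*)|<q_B$ in hand, it uses $q_L<q_B$ to replace $2/(q_B+q_L-2)$ by $1/(q_B-1)$, so that the comparison of \eqref{eqn: existence_bound} with \eqref{eqn:linear_condition} becomes a relation between $q_B$ and $2|T'(u^*)|/(|T'(u^*)|-1)$ alone; the threshold $|T'(u^*)|\ge 3$ is precisely what makes $2|T'(u^*)|/(|T'(u^*)|-1) \le |T'(u^*)| < q_B$. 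In other words, the paper's mechanism exploits $q_B$ lying \emph{above} $|T'(u^*)|$ --- the reverse of the inequality your proof rests on --- so your route cannot be reconciled with it, and as written your proposal does not prove the statement for any $|T'(u^*)| > 3$.
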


\begin{proof}
Without loss of generality,  we have that $q_{L}<|T^{\prime} \lp  u^{*}\rp |<q_{B}$ as the contrasting input states $u_{L}$ and $u_{B}$ will diverge from $u^{*}$ in opposing directions.  From inequality\autoref{eqn: existence_bound} we know that $1 < q_{B}$ must hold as $w_{1} \in \R_{\geq 0}$.  Therefore,  if we assume that the homogeneous steady state of the monotone SISO system\autoref{SISO} is unstable,  which is required for laminar patterning and so inequality\autoref{eqn: existence_bound} is satisfied.  Comparing inequalities\autoref{eqn: existence_bound} and\autoref{eqn:linear_condition},  we have that inequality\autoref{eqn:linear_condition} holds when
\begin{equation}\label{eqn:pb_t}
\frac{1}{q_{B}-1} < \frac{2}{q_{B}+q_{L}-2} < \frac{|T^{\prime} \lp  u^{*}\rp | - 1}{|T^{\prime} \lp  u^{*}\rp | + 1},
\end{equation}
where the left-most term follows from $q_{L}<q_{B}$.  Rearranging inequality\autoref{eqn:pb_t} yields
\begin{equation}\label{eqn:approx_pb_t}
\frac{2 |T^{\prime} \lp  u^{*}\rp | }{|T^{\prime} \lp  u^{*}\rp | -1} <q_{B},
\end{equation}
then applying our assumption $|T^{\prime} \lp  u^{*}\rp | < q_{B}$,  inequality\autoref{eqn:approx_pb_t} can be solved directly, providing the minimum value of $ |T^{\prime} \lp  u^{*}\rp | = 3$ to satisfy inequality\autoref{eqn:pb_t}.
\end{proof}

Inequality\autoref{eqn_full_stab_bound} outlines the relationship between cellular connectivity and signal protein feedback that is required to be balanced to ensure the maintenance of pattern formation in bilayer static geometries. However, we note that the $\mathcal{L}_{2}$-gains are dependent on the geometry, as they are a function of the input value defined by the discrete spatial operator $\langle \cdot \rangle$, see \autoref{def:L2}. Thus, inequality\autoref{eqn_full_stab_bound} cannot determine explicit conditions for the relationship between geometry and feedback model, as in the existence inequality\autoref{eqn: existence_bound}.\par  
In addition, inequality\autoref{ineq:final_bound} and therefore inequality\autoref{eqn:explicit_stab} describes a relaxation of the Small Gain theorem for closed-loop system, commonly used in control theory applications \cite{haddad2011nonlinear}. To demonstrate this relaxation of the Small Gain theorem, w.l.o.g. assume that $q_{L} < q_{B}$, as we expect the partitions $P_{B}$ and $P_{L}$ of $\mathcal{G}$ to obtain contrasting solution states. In this case, inequality\autoref{ineq:final_bound} is bounded above,
\begin{equation}
q_{B}q_{L} \left( b^{2} - a^{2} \right) + a \left( q_{B} + q_{L} \right)  < q_{B} \lp \lp b^{2} - a^{2} \rp q_{B} + 2a \rp,
\end{equation}
where $a = n_{1}w_{1}/N_{w}$ and $b = n_{2}w_{2}/N_{w}$ as in \autoref{coro_stab} and we impose that $b>a$ from \autoref{prop: bound}. Therefore if
\begin{equation}\label{eqn:small_gain_equiv}
q_{B}\left( \left(b^{2} - a^{2} \right)q_{B} + 2a \right) < 1
\end{equation}
holds, then inequality\autoref{ineq:final_bound} must also be satisfied. Solving inequality\autoref{eqn:small_gain_equiv} for $q_{B}$ implies if $q_{B}<1$ then the dynamical system is locally asymptotically stable. Moreover, if $q_{B}<1$ then $q_{B}^{2}<1$, which implies $q_{B}q_{L}<1$, thus satisfying the Small Gain theorem (\autoref{thm:small-gain}). As a special case of \autoref{coro_stab}, if each cell in the cellular domain has no adjacent cell of the same type, namely $n_{1} = 0$, then inequality\autoref{ineq:final_bound} is equivalent to the Small Gain theorem, as demonstrated previously for checkerboard patterns using lateral-inhibition models (\autoref{fig:checkerboard_domains}) \cite{RufinoFerreira2013}.


\section[Application]{Application: Notch-Delta pattern formation in the mammary organoid}
To illustrate the results of sections 4.1-4.2, we use the Notch-Delta lateral-inhibition model outlined in Section 3.1 to impose signalling polarity conditions for luminal and basal cells in the mammary organoid to achieve the experimentally observed laminar pattern formation of Notch (and therefore Delta) in a bilayer of cells (Figures 3b-3d). In addition, we check the validity of theoretical bounds by conducting fixed lattice simulations for bilayer cyclic structures using the ODE system (3.1-3.2). Furthermore, using the analytic and numerical polarisation conditions on $w_{1}$ and $w_{2}$, we employ cell-based modelling to highlight the ability of the family of static lattice structures (see \autoref{tab:geometry sum}) to gain insight into signal polarity conditions in dynamic cellular domains.

\subsection{Methods for static lattice simulations}\label{sec:static_methods}
The 2D fixed lattice geometries were considered as a 6 cell system, split equally as luminal and basal cells as demonstrated in \autoref{fig:all connectivity diagrams}a. This cyclic geometry generates a system of 12 ODEs that were coupled via the \textit{scaled adjacency matrix} $(1/N_{w})\bm{W}$ as previously discussed in Section 3.2. Similarly, 3D fixed geometries were treated as a cyclic 18 cell system, and therefore producing a system of 36 ODEs. For both 2D and 3D geometries, the ODE systems were solved numerically using the \texttt{ode45} function in Matlab 2019b. The simulations were solved over a period of 100 hours to allow for the system to reach a steady state.\par 

To determine if the luminal and basal layers have converged to contrasting states of Notch-Delta expression, the mean value of Delta expression was taken from each layer of cells. Explicitly, let $d_{j}$ denote the mean final Delta values in each layer of cells, such that,
\begin{equation}
d_{j}  = \frac{2}{\mathcal{N}} \sum_{i}^{\mathcal{N}}D_{i,j},
\end{equation}
where $\mathcal{N}$ is the total number of cells in system. The difference $\Delta d = |d_{1} - d_{2}|$ indicates the existence of laminar bilayer pattern formation. We considered the system to have achieved a laminar bilayer pattern if $\Delta d$ was greater than a prescribed tolerance, $\delta>0$. \par
The static simulation parameter sweeps for $w_{1}$ and $w_{2}$ where conducted over a discretised $150 \times 150$ regular grid lattice for $w_{1} \in (0,0.25]$ and $w_{2} \in (0,1]$, resulting in 22500 simulations per static geometry. In all static lattice simulations, we choose $a = 0.01$, $b = 100$, $\mu_{N} = \mu_{D} = 1$, $h = 1$ and $k=2$ as parameter values for the NDM (3.1-3.2), as previously defined \cite{Collier1996}.

\subsection{Methods for lattice free simulations using a cell-based model}\label{sec:abm_methods}

Cell-based simulations were carried out using Chaste v2019.1 (Cancer, Heart and Soft Tissue Environment) \cite{Mirams2013}, where the Overlapping Spheres (OS) framework was used to enable seamless transition between 2D and 3D geometries. In addition, it has been previously demonstrated that OS models are highly applicable to study short ranged signal-reaction networks in cellular systems due to the mechanical methods used to define cellular contact \cite{Osborne2017}. Cells are connected by a mechanical force which is proportional to the region of overlap of spheres defined around each cellular node, see \autoref{fig:ABM_methods}. Here, we used the OS force model as defined in \cite{Atwell2015}, where, the displacement of two nodes representing a cell centres is represented by the vector $\bm{r}_{ij} = \bm{r}_{i} - \bm{r}_{j}$ and the force between the cells is defined by,
\begin{equation}
\bm{F}_{ij}(t) = 
		\begin{cases}
						    \eta_{ij}s_{ij}(t)\hat{\bm{r}}_{ij}(t) \log{\left(  1+  \frac{||\bm{r}_{ij}(t)|| -s_{ij}(t)}{s_{ij}(t)}      \right)}, &  \text{for}\quad ||\bm{r}_{ij}(t)|| < s_{ij}(t), \\
      						\eta_{ij} \left(  ||\bm{r}_{ij}(t)|| -s_{ij}(t)    \right)  \hat{\bm{r}}_{ij}(t) \exp{ \left( -k_{c} \frac{||\bm{r}_{ij}(t)|| -s_{ij}(t)}{s_{ij}(t)}         \right)},   &  \text{for}\quad  s_{ij}(t) \leq ||\bm{r}_{ij}(t)|| < r_{\text{max}}, \\
     						0, & \text{for} \quad ||\bm{r}_{ij}(t)|| > r_{\text{max}},
    \end{cases} 
\end{equation}
where $ \eta_{ij},s_{ij}(t)>0$ are the spring constant and rest length between cells $i$ and $j$. $\hat{\bm{r}}_{ij}(t)$ corresponds to the unit vector of $\bm{r}_{ij}(t)$ and $k_{c}$ defines the decay of force between the cells. Upon cellular division, the rest length $s_{ij}(t)$ of both parent and daughter cells are set to $s_{ij}^{\text{div}} = s_{ij}(t)/2$ and will tend back to $s_{ij}(t)$ in finite time as the cell grows.
In all simulations, random motion was introduced to each cell to stimulate a dynamic cellular domain. The random motion was implemented by an additional force acting on each cell node at each timestep, 
\begin{equation}
\bm{F}^{\text{rand}} = \sqrt{\frac{2\xi}{\Delta t}} \bm{\nu},
\end{equation}
where $\xi$ is a constant defining the size of random perturbation, $\bm{\nu}$ is a vector of samples from a standard multivariate normal distribution and $\Delta t$ the timestep of the simulation, as previously defined \cite{Osborne2017}. The resultant force acting on cell $i$ is defined by,
\begin{equation}
\bm{F}_{i}^{\text{res}}(t) = \bm{F}_{i}^{\text{rand}} + \sum_{j}^{\mathcal{N}_{i}} \bm{F}_{ij}(t),
\end{equation}
for $\mathcal{N}_{i}$ is the number of cells within the cut-off distance, $r_{\text{max}}$. Using this resultant force acting upon each cell, we relate this to cellular movement using the assumption that the inertia terms are small in  comparison to the dissipative terms acting upon the cell. This is because both \textit{in vivo} and \textit{in vitro} cells move in dissipative environments with small Reynolds number \cite{purcell1977life}, thus the position of each cell is governed in the Aristotelian regime, such that the velocity of a cell is proportional the force acting on it. Namely, the spatial dynamics of each cell is determined by,
\begin{equation}\label{arstist_eqn}
\nu \frac{d \bm{r}_{i}}{dt} = \bm{F}_{i}^{\text{res}}(t),
\end{equation}
where $\nu>0$ denotes the damping constant of the spring force. Equation\autoref{arstist_eqn} is solved using the simple forward Euler method to determine the location of each cell at each timestep, $\Delta t$, see \autoref{tab:chaste parameters} \cite{chapra2012applied}.
\par

 \begin{figure}[h!]
         \centering
         \includegraphics[width=0.9\textwidth]{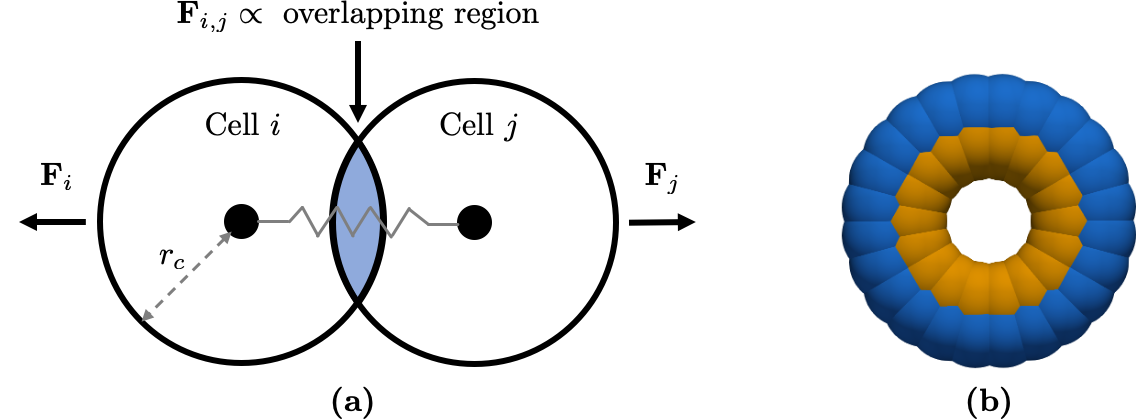}

 			\caption{The cell-based model using the Overlapping Spheres framework. (a) A schematic of the mechanical dynamics that determines the motion of a cell using the Overlapping Spheres framework. The mechanical force acting on each cell is proportional to the region of overlapped space between any two nodes which are the centre of spheres with radius $r_{c}$. The mechanical force between cells $i$ and $j$ can interpreted as a spring force and due to the relevantly low viscosity of the medium, it is assumed that the motion of each cell is governed in an Aristotelian regime, that is, the force is directly proportional to the velocity of the cell. (b) An example of the 2D initial spatial conditions when simulating the bilayer spheroid. The colours of the cells denote cell types, where the blue and orange cells are the basal and luminal cells respectively. The present example has a spheroid radius of 3 cell diameters (CD).}
        \label{fig:ABM_methods}
\end{figure}

Simulations were initialised with a bilayer structure, see \autoref{fig:ABM_methods}b. Basal and luminal cell types were considered to be mechanically identical to isolate the affects of neighbourhood cell-type composition on Delta patterning. Cells were assumed to not proliferate in both 2D and 3D simulations, this was done to control the spatial organisation of cell-types in each layer.\par

The NDM (3.1-3.2) was integrated into each cell in the population and was solved using the explicit Runge-Kutta45 method \cite{chapra2012applied}, which is built into the Chaste software. At every timestep, each cell would sweep through the population to determine the connectivity neighbourhood, which is defined by all nodes within a radius of $\rho_{c}$, as in the fixed geometry simulations. In the simulations presented here, we assume the connectivity radius, $\rho_{c}$, is equal to the mechanical cut-off length, $r_{\text{max}}$. Once a cellular neighbourhood has been determined for each cell, the average Delta is calculated using equation\autoref{eqn:average_delta}, and then updated in the state variables to be used to solve the next timestep of the NDM (3.1-3.2). In all dynamic lattice simulations, we choose $a = 0.01$, $b = 100$, $\mu_{N} = \mu_{D} = 1$, $h = 1$ and $k=2$ as parameter values for the NDM (3.1-3.2), as previously defined \cite{Collier1996}. \par
We measure the existence of laminar patterning of Notch in dynamic domains by taking the difference of the average Notch expression in each cell-type, $\Delta N$. Namely, 
\begin{equation}
\Delta N = \frac{1}{\mathcal{N}_{L}} \sum_{i}^{\mathcal{N}} (1-\delta_{\tau(i),\tau(B)})N_{i}  - \frac{1}{\mathcal{N}_{B}} \sum_{i}^{\mathcal{N}}\delta_{\tau(i),\tau(B)}N_{i},
\end{equation}
where $\mathcal{N}$ is the total number of cells, $\mathcal{N}_{L}$ is the number of luminal cells and $\mathcal{N}_{B}$ is the number of basal cells. The function $\delta_{\tau(i),\tau(B)}$ is cell-type Kronecker delta function,
\begin{equation}
\delta_{\tau(i),\tau(B)} = \left\{
	\begin{array}{ll}
		1  & \mbox{if cell $i$ is a basal cell}, \\
		0 &  \mbox{if cell $i$ is a luminal cell}.
	\end{array}
\right.
\end{equation}
The seeds used to initialise the generation the pseudo-random numbers were fixed for all simulations to compare signal strength parameters on dynamic domains. In addition, $w_{2} = 1$ was fixed for each comparison simulation, see Section 5.4. Parameter values used in all cell-based simulations can be found in \autoref{tab:chaste parameters}.

\begin{table}[H]
\begin{center}
\begin{tabular}{ |c|c|c|c|c| }
 \hline
Parameter & Description & Value & Units & Reference  \\ 
\hline
$t_{\text{tot}}$ & Total simulation time & 100 & h & - \\
$\Delta t$ & Timestep & 0.01 & h & - \\
$\eta_{ij}$ & Spring constant & 25$^{*}$ & NCD$^{-1}$ &  - \\ 
$s_{ij}$ & Spring rest length & 1 & CD & - \\
$r_{\text{max}} $ & Force cut-off length & $3/2^{*}$ & CD & - \\
$k_{c}$ & Decay of force & 5&  \textit{Dim'less} & \cite{Atwell2015} \\
$\xi$  & Random motion perturbation & $0.0025^{*}$& \textit{Dim'less}  & - \\
$\nu$ & Damping constant & 1 &  NhCD$^{-1}$ & \cite{pathmanathan2009computational}\\
\hline
\end{tabular}
\caption{Table of parameters used in each cell-based simulation. The unit of length CD refers to the fixed cell diameter used in simulations. * indicates parameter values tuned for bilayer structure maintenance, the rest of the simulation parameters used in this study were extracted from \cite{Osborne2017}.}\label{tab:chaste parameters}
\end{center}
\end{table}

\subsection{Notch-Delta lateral-inhibition model in static bilayer domains}\label{sec:static_results}

In order to apply both \autoref{prop: bound} and \autoref{coro_stab} to define signal strength parameter regimes for laminar patterning (\autoref{fig:static_results}a), we first rewrite the spatially discrete ODE (3.1-3.2) in the form of a SISO system (3.9-3.10). Let $\bm{x}_{i} = [N_{i},D_{i}]^{T}$ denote the vector of state variables of the system, where $i$ designates cellular identity. Then the input to each cell is the local spatial information received via the $\langle \cdot \rangle$ operator, such that $u_{i} =\langle D_{i} \rangle$. Similarly, the output of each cell is the Delta expression $y_{i} = D_{i}$. To apply \autoref{prop: bound} to our biological model, we need to determine the following: (i) the derivative of the transfer function, $T$, of the SISO system, and (ii) the homogeneous steady state, $\bm{x}^{*}$, of the dynamical system.\par
(i) The derivative of the transfer function can be derived by taking the partial derivative of the SISO system w.r.t. the inputs and state variables, as shown in equation\autoref{eqn:transfer_der}. Thus, for NDM (3.1-3.2),
\begin{equation}\label{eqn:trans_matrices}
\frac{\partial h}{\partial \bm{x}_{i}} = \left[
  \begin{array}{cc}
  0 & 1
   \end{array}
\right], \quad \frac{\partial \bm{f}}{\partial \bm{x}_{i}} =  \left[
  \begin{array}{cc}
  -\mu_{N} & 0\\
  -\frac{bhN_{i}^{h-1}}{ \left( 1+ bN_{i}^{h} \right)^{2}  } &  -\mu_{D}
   \end{array}
\right] \quad \mbox{and} \quad \frac{\partial \bm{f}}{\partial u_{i}} =\left[ \begin{array}{c}
  \frac{aku_{i}^{k-1}}{\left( a + u_{i}^{k} \right)^{2}} \\
  0
   \end{array}
\right].
\end{equation}
Therefore, multiplying the matrices in equation\autoref{eqn:trans_matrices} and making the substitution $N_{i} = f(u_{i})/\mu_{N}$ at steady state, yields the following,
\begin{align}\label{eqn:T_collier}
T^{\prime}(u_{i}) &= -  \left[
  \begin{array}{cc}
  0 & 1
   \end{array}
\right] \left[
  \begin{array}{cc}
  -\mu_{N} & 0\\
  -\frac{bhN_{i}^{h-1}}{ \left( 1+ bN_{i}^{h} \right)^{2}  } &  -\mu_{D}
   \end{array}
\right]^{-1} \left[ \begin{array}{c}
  \frac{aku_{i}^{k-1}}{\left( a + u_{i}^{k} \right)^{2}} \\
  0
   \end{array}
\right], \nonumber \\ &= -\frac{ abkh\mu_{N}^{h} \left( a+u_{i}^{k}  \right)^{h-1} u_{i}^{kh-1} }{   \mu_{D} \left( \mu_{N}^{h}   \left( a + u_{i}^{k} \right)  +bu_{i}^{hk}      \right)^{2} },
\end{align}
which completes the first step in full generality.\par
(ii) We now solve the NDM (3.1-3.2) for the homogeneous steady state. Using the symmetry of homogeneous cell states, this problem is reduced to solving $\bm{f}\left(\bm{x}^{*}, D^{*} \right) = 0$ as $u_{i}^{*} = D_{i}^{*} = D^{*}$ for all $i = 1,...,N$ in the case of a system of identical cells. Solving the system (3.1-3.2) for homogeneous steady states means solving the following polynomial for $D^{*}$,
\begin{equation}\label{eqn:poly_for_D}
b\mu_{D}\left(D^{*}\right) ^{kh+1} + \mu_{N}^{h}\left( \mu_{D}D^{*} - 1 \right) \left( a + \left( D^{*} \right)^{k} \right)^{h} = 0,
\end{equation}
which has analytic solutions for $hk<4$. We choose the parameter values $a = 0.01$, $b = 100$, $\mu_{N} = \mu_{D} = 1$, $h = 1$ and $k=2$ selected previously \cite{Collier1996}, with the only modification of $h$ to simplify calculations for demonstration purposes. To be able to apply \autoref{prop: bound}, we require that $|T^{\prime}(u^{*})|>1$ because $w_{1}>0$. This condition is equivalent to the requirement derived by direct linear analysis as in \cite{Collier1996}, where they show that the existence of homogeneous steady state instability can only occur when
\begin{equation}
f^{\prime}(u^{*})g^{\prime}(N^{*}) < -1.
\end{equation}
As the derivative transfer function $T^{\prime}(u^{*}) = f^{\prime}(u^{*})g^{\prime}(N^{*})/(\mu_{N} \mu_{D})$, using the same parameter values for $\mu_{N}$ and $ \mu_{D}$ as in \cite{Collier1996} yields the equivalent condition. Moreover,  $|T^{\prime}\left(u^{*} \right)|$ is a monotone increasing function with respect to $k$ (see \autoref{fig:mono_k}), hence increasing the nonlinearity of the ODE system relaxes the restrictions on $w_{1}$ for the existence of pattern emergence imposed by \autoref{prop: bound}, therefore emphasising the relationship between the connectivity of the cells and the characteristics of the ODE system.\par

Solving the cubic polynomial\autoref{eqn:poly_for_D} when $k=2$ yields a homogeneous steady state $D^{*} \approx 0.049$, therefore, we have both (i) and (ii). Applying \autoref{prop: bound} to the NDM system (3.1-3.2) using equation\autoref{eqn:T_collier} and $D^{*}$ yields the following bound on signal strength parameters,
\begin{equation}\label{analytical_bound_exist}
w_{1} < \alpha\frac{w_{2}}{R_{\tau}},
\end{equation}
for $\alpha = 0.21$, which defines a strict analytical restriction of the $(w_{1},w_{2})$ parameter space for the emergence of laminar patterning between layers (region below black line in \autoref{fig:static_results}b). 
\par

 \begin{figure}[h!]
         \centering
         \includegraphics[width=0.5\textwidth]{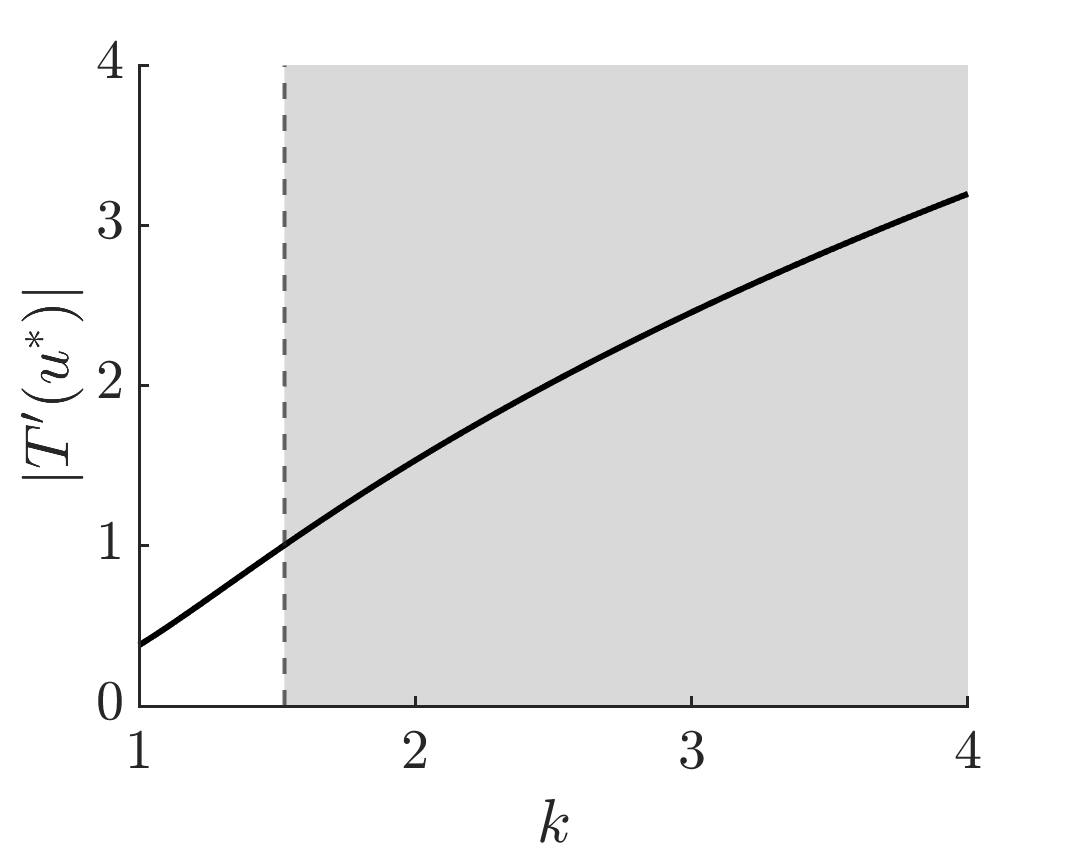}

 			\caption{Monotonicity $|T^{\prime}(u^{*})|$ with respect to $k$. Parameter values were chosen as parameter values $a = 0.01$, $b = 100$, $\mu_{N} = \mu_{D} = 1$ and $h = 1$. For each $k$, the homogeneous steady state was solved using equation\autoref{eqn:poly_for_D}. The shaded region represents the values of $k$ that satisfy the $|T^{\prime}(u^{*})| > 1$, which is required for the instability of the homogeneous steady state in \autoref{prop: bound}, highlighting a lower bound of $k_{\text{min}} = 1.5$.     }
        \label{fig:mono_k}
\end{figure}

As we have found the necessary bound on $w_{1}$ for pattern formation, we now seek to use \autoref{coro_stab} to impose a sufficient bound on $w_{1}$. In order to use \autoref{coro_stab}, we require the $\mathcal{L}_{2}$-gains for each representative cell at steady state. Making use of the monotonicity of the NDM system (3.1-3.2) with respect to the cones $K^{U} = \R_{\geq 0}$, $K^{Y} = -K^{U}$ and $K^{\bm{X}} = \{ \bm{x} \in  \R^{2} | x_{1} \geq 0, x_{2} \leq 0 \} $ \cite{Arcak2013}, we are able to use the steady state relation\autoref{eqn:l2_t}. To determine the $\mathcal{L}_{2}$-gains, we solve for the heterogeneous steady states $\bm{x}_{B}$ and $\bm{x}_{L}$, with associated input steady states
\begin{equation}
u_{B} = \frac{n_{1}w_{1}D_{B} + n_{2}w_{2}D_{L} }{N_{w}} \quad \mbox{and} \quad u_{L} = \frac{n_{1}w_{1}D_{L} + n_{2}w_{2}D_{B} }{N_{w}},
\end{equation}
then using equation\autoref{eqn:T_collier}, $q_{B} = |T^{\prime} \left( u_{B}\right)| $ and $q_{L} =| T^{\prime} \left( u_{L}\right)|  $. For each static geometry outlined in \autoref{tab:geometry sum}, a parameter sweep of the signal strength parameter space $\left( w_{1}, w_{2}\right)$ was conducted to highlight regions that satisfy the inequality\autoref{eqn_full_stab_bound} where the heterogeneous steady states $\bm{x}_{B}$ and $\bm{x}_{L}$ were numerically solved using the \texttt{fsolve} function in Matlab 2019b. The resulting stability regions in the $(w_{1},w_{2})$-space (red shaded regions in \autoref{fig:static_results}b) has the same linear form as the analytical existence bound\autoref{analytical_bound_exist}. Therefore, assuming the same form of relationship between $w_{1}$, $w_{2}$ and $R_{\tau}$, a ubiquitous gradient parameter $\beta$ was extracted from each static lattice parameter sweep. That is, to ensure the local asymptotic stability of the laminar bilayer patterns (\autoref{fig:static_results}a) in both 2D and 3D, 
\begin{equation}\label{example:stab_bound}
w_{1}  < \beta \frac{w_{2}}{R_{\tau}}
\end{equation}
must be satisfied, for $\beta = 0.04$. We have provided an improvement on equation\autoref{analytical_bound_exist} from necessary to sufficient for laminar pattern formation using the NDM system (3.1-3.2), nevertheless, this defines a highly restrictive parameter bound on $w_{1}$.\par
Finally, using static lattice simulations for each of 2D and 3D geometries described in \autoref{tab:geometry sum} (see \autoref{sec:static_methods}) we conducted the same parameter sweep over the $(w_{1},w_{2})$-space to verify the necessary bound of inequality\autoref{analytical_bound_exist} and the sufficient bound of inequality\autoref{example:stab_bound}. The parameter regions that exhibited the layered patterning using a pattern tolerance of $\delta = 0.1$ were consistent with the analytical inequalities\autoref{analytical_bound_exist} and\autoref{example:stab_bound}, as demonstrated in \autoref{fig:static_results}b by the blue shaded regions. Furthermore, the regions defining the observed patterns from numerical simulations had the same linear upper bound for $w_{1}$ as a function of $w_{2}$ for all 2D and 3D geometries. Therefore, as conducted for the stability inequality\autoref{example:stab_bound}, we extracted a ubiquitous gradient parameter $\gamma$, such that laminar patterning can be observed in a bilayer of cells if $w_{1}$ satisfies,
\begin{equation}
w_{1}< \gamma \frac{w_{2}}{R_{\tau}}
\end{equation}
where $\gamma = 0.11$. Note that due to the symmetry of the system in order to achieve the laminar patterning in the correct direction, the system required a small perturbation using initial conditions. Moreover, as the pattern tolerance $\delta \rightarrow 0$ then $\gamma \rightarrow \alpha$, due to the contrast between the layers becoming much weaker, see \autoref{fig:2d_abm}b. Thus the arbitrary choice of $\delta$ defines what is considered as acceptable patterning, though we note that the necessary bound provided by \autoref{prop: bound} is always satisfied.\par 

As the observed pattern regions lie within the existence bound regions and the sufficient stability bound regions are a subset of the observed pattern regions in $(w_{1},w_{2})$-space (\autoref{fig:static_results}b), we numerically verify the analytical conditions imposed on the signal strength parameters $w_{1}$ and $w_{2}$ by \autoref{prop: bound} and \autoref{coro_stab} using the NDM system (3.1-3.2). In each case, for existence, stability and numerical observation, there exists a consistent form for the upper bound of the cell-type dependent signal strength parameter $w_{1}$, which relates cellular connectivity to signal strength polarisation, via $R_{\tau}$, independent of lattice dimension. Thus, static domains may no longer be required to impose conditions for laminar pattern formation. Knowing only the cell-type composition of the neighbourhood for each cell may be sufficient for an adaptive signal strength to maintain pattern formation, and therefore would allow for juxtacrine dependent pattern formation to be studied in dynamic cellular domains, which will be discussed further in \autoref{sec:abm_results}.

 \begin{figure}[H]
  \vspace{0cm}
         \centering
         \includegraphics[width=0.75\textwidth]{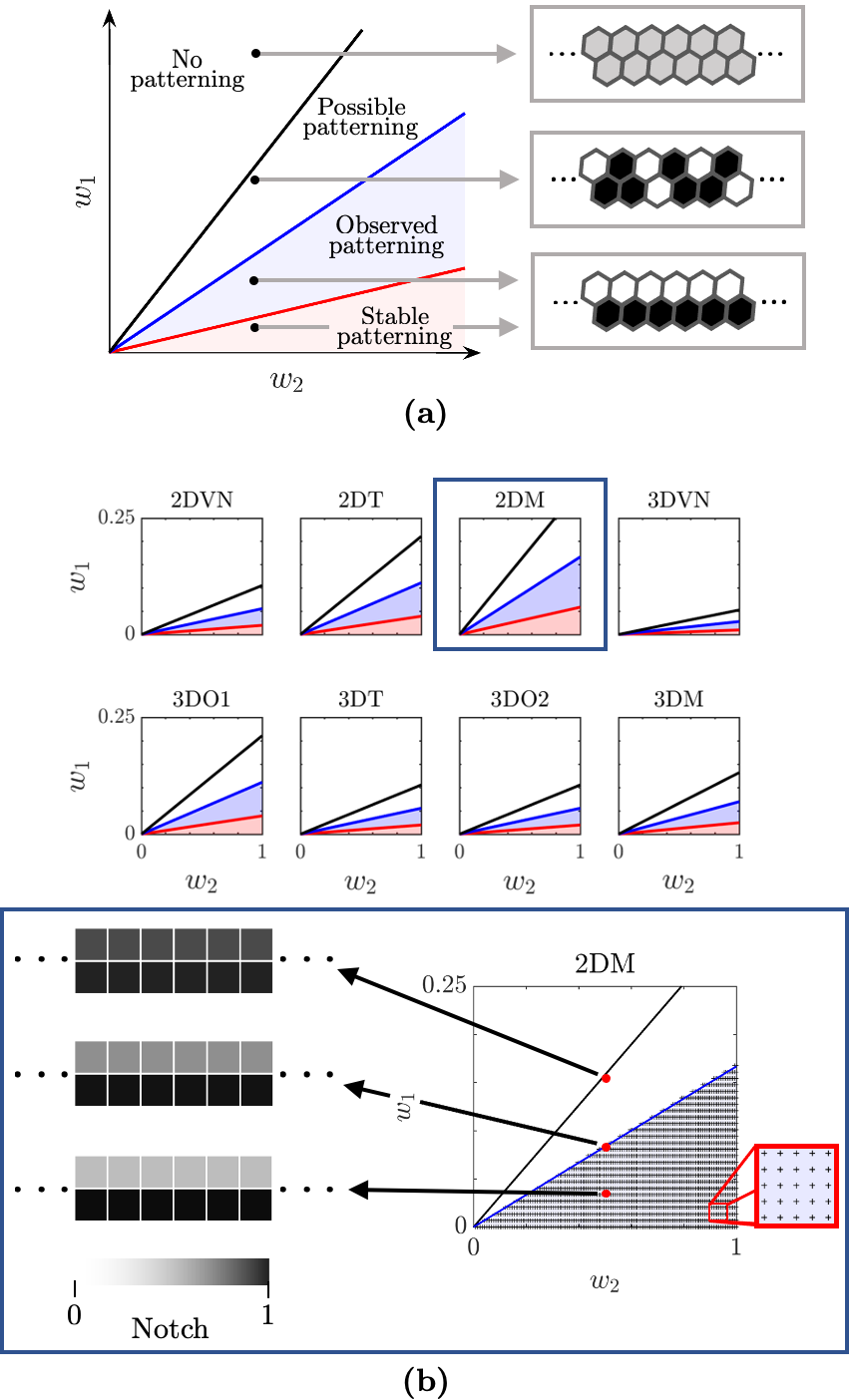}

 			\caption{Signal strength polarisation conditions in static bilayer geometries. (a) A representative diagram of the $(w_{1},w_{2})$ parameter space that yields conditions for bilayer laminar patterning. The region above the black line corresponds to stability of the  steady state, where the black line is the upper bound of $w_{1}$ provided by \autoref{prop: bound}. (Continued on the following page.) }
        \label{fig:static_results}
\end{figure}

\begin{figure}[t]
  \contcaption{(Continued.) The blue line is the upper bound of $w_{1}$ determined from numerical simulations of the ODE system and the red line is the upper bound for analytical stability of the heterogeneous steady states provided by \autoref{coro_stab}.  Representative patterns are embedded into the 2D Triangulated lattice.  (b) $(w_{1},w_{2})$ parameter space highlighting laminar pattern regions shown in (a) for each static geometry outlined in \autoref{tab:geometry sum} with example simulation results using the 2DM lattice. The magnified region of $(w_{1},w_{2})$-space demonstrates the high density of parameter values with the capacity to produce laminar patterning, denoted by +,  which defines the blue observed regions in all static lattice simulations. Red points represent the parameter values used in the example simulations.}
\end{figure}

\subsection{Notch-Delta lateral-inhibition model in dynamic bilayer domains}\label{sec:abm_results}
In \autoref{sec:static_results}, analytical and numerical bounds on $w_{1}$ were derived for the existence and stability of Notch-Delta polarisation in static bilayer cell domains. Motivated by the consistency of the laminar patterning of Notch in the developing mammary gland (see \autoref{fig:mammary_organoid_intro}), we seek to test if the static bounds derived on $w_{1}$ are able produce the same pattern formation in a dynamic domains. In addition, we investigate the efficacy of using a fixed and adaptive upper bound of $w_{1}$ to ensure laminar pattern stability in bilayers.\par

When transitioning to dynamic domains, we cannot always satisfy the \textit{equitable} property of the cell-type partitions $P_{B}$ and $P_{L}$ in the connected graph. Consequently, the analytical conditions derived in \autoref{sec:theoretical_results} cannot be applied at each timestep of the simulations, instead, we use the static domain inequalities (5.14-5.15) to gain intuition for polarisation conditions in dynamic geometries. In particular, we focus on how a cell responds to the microenvironment via two cell-type signal strength mechanisms: (i) globally fixed values of $w_{1}$ and $w_{2}$ and (ii) a locally adaptive $w_{1}$ for a globally fixed $w_{2}$. By investigating these two types of signal strength mechanisms in the dynamic cellular domains, we are able to measure the influence of varying cellular connections on pattern stability as the system evolves.\par

The \textit{fixed} mechanism for $w_{1}$ (case (i)) is used to represent a high inertia of cellular adaptability to the local environment of the cell, that is, that transmission strengths are defined at birth. Here, the $w_{1}$ is set to agree with the inequalities (5.14) and (5.15), designated as ``Fixed $\beta$'' (F$\beta$) and ``Fixed $\gamma$'' (F$\gamma$), respectively. Using the \textit{fixed} mechanism in dynamics simulations, the $R_{\tau}$ value is defined by the initial connectivity of the geometry and is constant throughout the simulation.\par

The \textit{adaptive} mechanism for $w_{1}$ (case (ii)) is used to represent a low inertia of cellular response to the microenviroment. That is, for each cell, $w_{1}$ is updated at each timestep to satisfy the observed static inequality (5.15) by determining $R_{\tau,i}$, i.e. the cell-type composition of the neighbourhood for each cell $i$ (see \autoref{sec:abm_methods}). We denote this signal strength mechanism ``Adaptive $\gamma$'' (A$\gamma$). \par

Simulating each signal strength mechanism, \textit{fixed} (i) and \textit{adaptive} (ii) for 100 hours demonstrates that conditions defining laminar pattern regions in static geometrics, inequalities (5.14-5.15), allow for the emergence of laminar patterns in dynamic cell geometries up to small spatial perturbations, \autoref{fig:2d_abm}a. That is, each mechanism initially produced concentric contrasting layers of Notch expression, however, as the bilayer layer geometry became deformed due to the random perturbations of each cell, the definition between layers was lost by 100 hours (Figures 12a-12c). Thus, information about cell-type signal polarisation is preserved when partitions of the connected graph are no longer \textit{equitable}, however, the retained information is insufficient for long-term stability of the Notch states.\par

In terms of pattern intensity and retention, the simulation using Fixed $\beta$ performed the best, though due to the high contrast between layers, the variance in Notch expression quickly becomes very large once consistent patterning is lost, Figures 12b-12c. The region of $(w_{1},w_{2})$-space, defined by inequality (5.14), which is sufficient for stability of heterogeneous states between layers of static bilayers is highly restrictive, such that $w_{1} \approx 0$ for all 2D and 3D geometries. In context, the simulation using Fixed $\beta$ accounts for the situation of no Delta signalling between cells of the same type. \par 

The Fixed $\gamma$ signalling mechanism produced the least contrast of Notch expression between layers initially, and was quick to lose the consistency of expression, therefore performing the worst out of the signal strength mechanisms (Figures 12b-12c). However, the Adaptive $\gamma$ signal strength mechanism yielded the greatest pattern retention over the total time, highlighted by the lowest variance from $\Delta N$ values Figures 12b-12c. As Adaptive $\gamma$ allows for $w_{1} \approx \gamma$ (see \autoref{fig:2d_abm}d), this ability of the cell to update signal strength dependent on the local cell-type composition, enables cells to still signal to cells of the same type whilst maintaining the concentric patterning. This highlights that if homophilic signalling is observed, then cells may be actively adapting to the microenvironment to stabilise stratification. \par

Furthermore, using the Adaptive $\gamma$ signal strength mechanism revealed that there is stricter polarisation conditions in the basal cells than luminal cells while laminar patterning is maintained, for $t < 50$h (Figures 12a-12c). That is, due to the geometry of the cellular domain, basal cells are less connected to the luminal layer than the luminal cells are to the basal. Hence, by the inverse relationship between the cell-type connectivity and lateral-inhibition model (\autoref{prop: bound}), the restricted cellular signalling imposed on the basal cells, may induce laminar pattern formation within the luminal cells, whilst allowing for greater luminal-luminal cell communication (\autoref{fig:2d_abm}d). Moreover, at $t \approx 50$h, a basal cell was disconnected from the luminal layer, producing a transient irregularity for $w_{1}$ values (bounded above inequality (5.12)), and therefore initiating the deterioration of laminar pattering \autoref{fig:2d_abm}d. 

 \begin{figure}[H]
         \centering
         \includegraphics[width=1\textwidth]{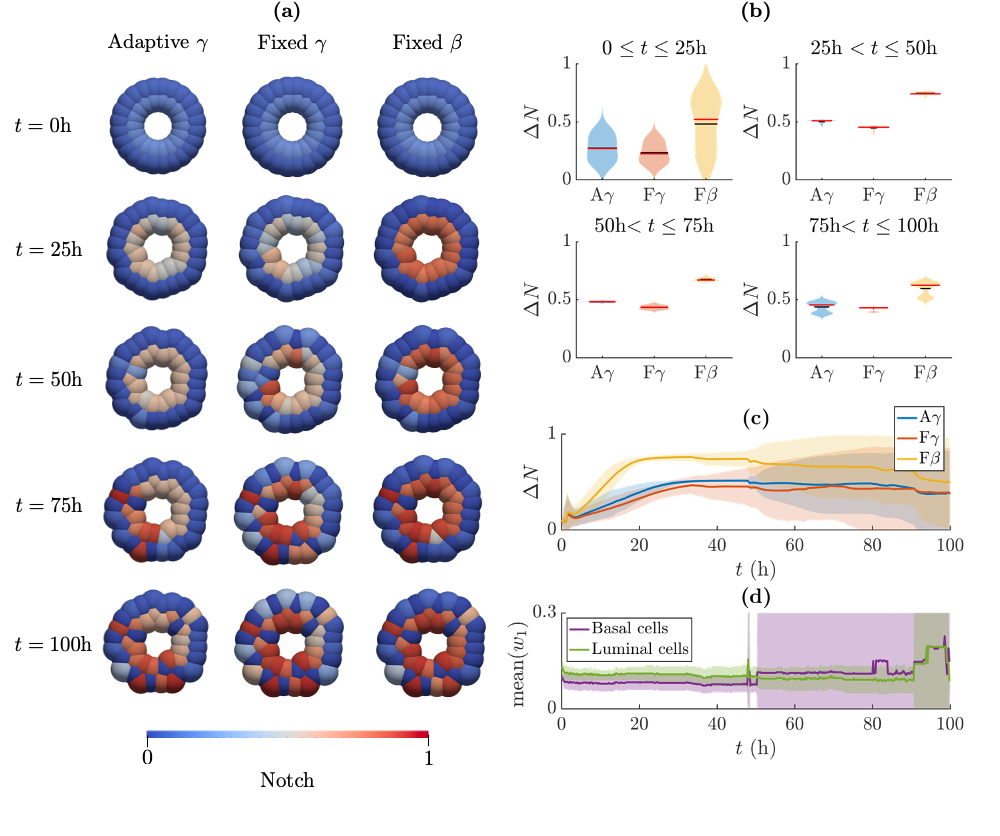}

 			\caption{2D dynamic cellular domain simulation results of fixed and adaptive signal strength polarisation with each simulation using a globally fixed value of $w_{2} = 1$. (a) Cell-based simulations of a cross-section of a bilayer spheroid. The simulations were run for 100 hours, initialised with cell-type stratification (see \autoref{fig:ABM_methods}b), and ODE initial conditions $\bm{x}_{B}(0) = [0.1,0.2]^{T}$ and $\bm{x}_{L}(0) = [0.2,0.1]^{T}$. The colour of each represents the intracellular level of Notch. (b) Violin plots summarising the results of (a). Shaded regions denote the probability density of the $\Delta N$ values. The black and red lines are the means  and medians of the $\Delta N$ values, respectively. (c) A plot of the $\Delta N$ value for each signal polarity mechanism over time. Shaded regions represent standard deviations from the mean Notch expression of each cell-type. (d) An additional output plot for the adaptive signalling mechanism demonstrating the disparity of $w_{1}$ values for basal and luminal cells over time. Shaded regions represent standard deviations from the mean $w_{1}$ of each cell-type.}
        \label{fig:2d_abm}
\end{figure}

Cell-based simulations investigating the efficacy of the static geometry polarisation conditions in dynamic domains were initially conducted on 2D cross-sections of bilayer spheroids. Analysis conducted on static geometries suggested that the signal strength conditions on $w_{1}$ are independent of physical dimension. We show in Figures 13a-13b that simulations of 3D spheroids are in agreement with 2D cross-sections, namely, that both the \textit{fixed} and \textit{adaptive} signalling mechanism are capable for generating laminar patterning but are unable to retain the definition of the layers for long periods. Due to the increase of cells in 3D simulations, there is a large increase in the amount of spatial perturbations of the cellular structure due to the random motion applied to each cell. As a result, the time at which consistent laminar patterning is broken occurs much earlier, at $t \approx 20$h (\autoref{fig:3d_amb}b). It should also be noted that initial spatial conditions are not identical as in the 2D simulations as currently there exists no solution to map equidistant points that cover the surface a sphere. We instead use the Fibonacci spiral method as an approximate solution, though, this produces clustering of cells the poles of the sphere \cite{gonzalez2010measurement}, and so introduces initial artefacts to cellular connectivity in the simulations.

 \begin{figure}[H]
         \centering
         \includegraphics[width=0.8\textwidth]{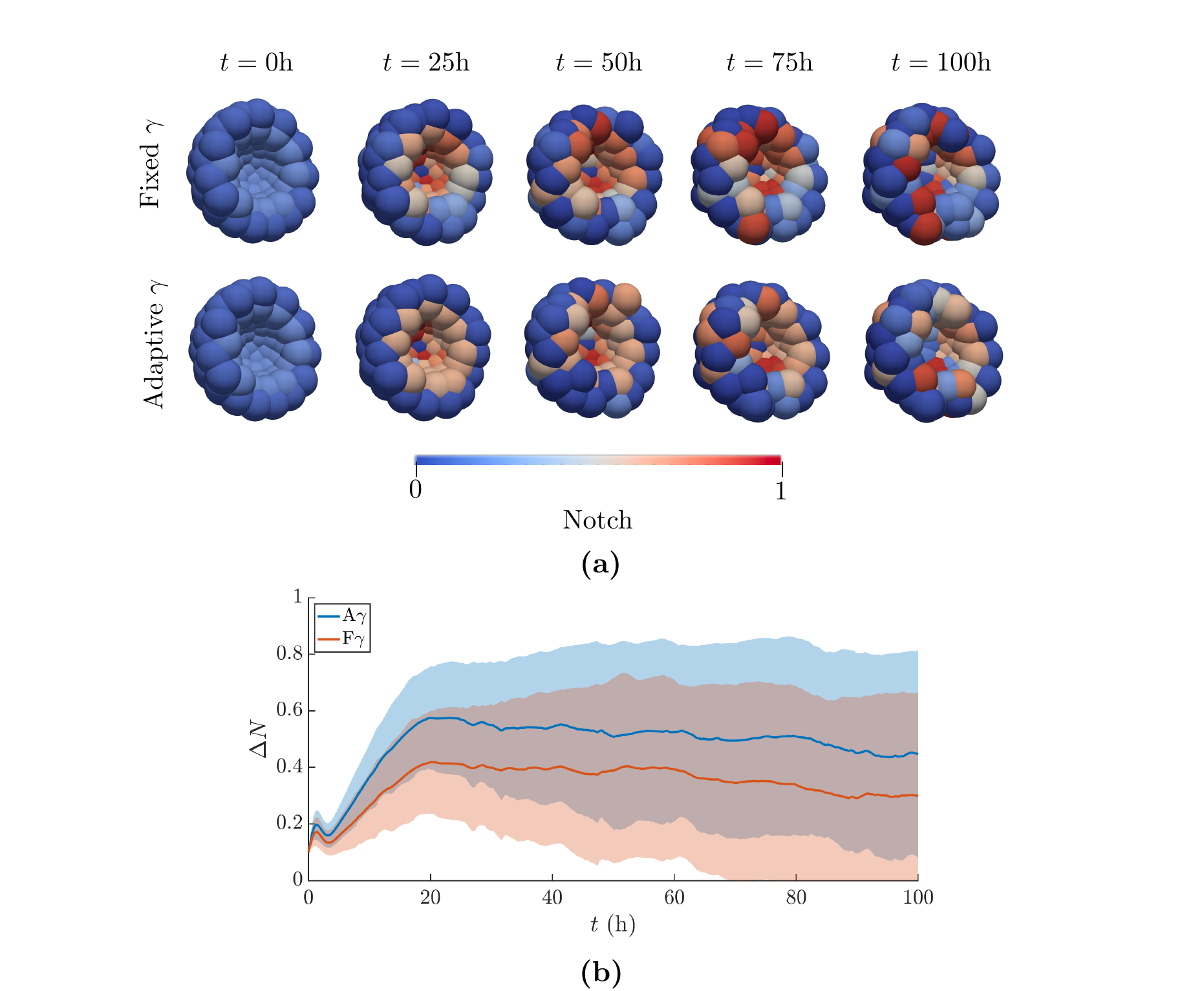}

 			\caption{3D cell-based simulation of a bilayer spheroid representing a developing mammary organoid using the \textit{adaptive} and \textit{fixed} signal strength mechanisms. (a) The simulations were run for 100 hours, initialised with cell-type stratification (see \autoref{fig:ABM_methods}b), and ODE initial conditions $\bm{x}_{B}(0) = [0.1,0.2]^{T}$ and $\bm{x}_{L}(0) = [0.2,0.1]^{T}$. The colour of each represents the intracellular level of Notch. Only half of the spheroids are visualised to show the dynamics of the internal luminal cells. (b) A plot of the $\Delta N$ value for both Fixed $\gamma$ and Adaptive $\gamma$ signal polarity mechanisms over time for the 3D simulations. Shaded regions represent standard deviations from the mean Notch expression of each cell-type.}
        \label{fig:3d_amb}
\end{figure}

\section{Discussion}

We have developed a framework for investigating cell-type juxtacrine signal strength polarisation conditions for emergence and stability of laminar patterning in symmetric bilayer structures using lateral-inhibition ODE systems. Leveraging previous results of graph partitioning on monolayers, we show how the geometry of the cellular domain has a large impact on the systems capacity to produce heterogeneity. Moreover, using this framework, we replace the algebraically demanding process of linear analysis of large multicellular systems with an exploitation of the spectral properties of the connected graph, therefore addressing the complexity issue discussed in previous juxtacrine pattern analysis studies \cite{Wearing2000}.\par  

In \autoref{sec:theoretical_results}, we provide necessary and sufficient conditions for the existence of laminar patterns in a bilayer of cells. Both existence and stability inequalities\autoref{eqn: existence_bound} and\autoref{eqn_full_stab_bound} highlight that upon increasing connectivity with opposing cell-types allows for larger existence and stability regions in $(w_{1},w_{2})$-space. In context of a bilayer of cells, as global concavity of the structure increases, luminal cells have a greater probability to connect with more basal cells. Therefore, relaxing the existence and stability conditions imposed by \autoref{prop: bound} and \autoref{coro_stab} by decreasing $R_{\tau}$. However, this would violate the symmetry between partitions required to apply both \autoref{prop: bound} and \autoref{coro_stab}, hence we propose that by investigating asymmetric connections between layers of cells may allow for a relationship between global curvature of the cellular structure and pattern stability. Furthermore, as the signal polarisation constraints, inequalities (4.1) and (4.5), presented in this study are independent of the lateral-inhibition model, \autoref{prop: bound} emphasises the influence of connectivity on pattern formation by requiring cell-type signal strength heterogeneity. Therefore, in cell-type stratified bilayer geometries, cell-type dependent signal strength polarisation is required for the existence of heterogeneous steady states, which is independent of the feedback mechanism. This indicates the critical role of cellular connectivity in juxtacrine systems and therefore the geometry of the cellular domain should be carefully considered in patterning formation studies. 

By studying a family of 2D and 3D static cellular domains of varying connectivity, we are able to gain insight into the emergence and stability of centric layer pattern formations in dynamic domains. Namely, by employing the bounds on $w_{1}$ derived from the static simulations, we were able to generate the laminar patterns in 2D and 3D bilayer spheroids when imposing random spatial perturbations on each cell. However, these patterns became unstable as geometry deformation increased, producing disorganised layers of Notch expression even when using an adaptive polarisation mechanism (\autoref{fig:2d_abm} and \autoref{fig:3d_amb}). Therefore, the information obtained from static domains is insufficient to fully characterise the behaviour of the lateral-inhibition model in a developing mammary organoid. Although, in this study we assume that the laminar pattern formation is driven purely by signal strength polarisation between the layers, thus neglecting the affect of the external environment on the biological system. That is, we neglect the influence of stroma or extracellular matrix and the importance of the lumen to the luminal cells in supporting high contrast of Notch expression \textit{in vivo} and \textit{in vitro}, respectively \cite{Lloyd-Lewis2019}. Thus, applying supplementary boundary conditions in dynamic domains in addition to signal polarisation may achieve laminar patterning, invariant to morphological perturbations. 

Applying the analytical polarisation conditions of \autoref{prop: bound} and \autoref{coro_stab} to the context of a mammary organoid using the Collier et al. (1996) NDM revealed that if patterns are to be experimentally observed then we required almost no juxtacrine communication between cells within the same layer (Figures 2b-2d). A plausible process to address the polarisation of Notch expression could involve cadherins \cite{Knudsen2005}, which are transmembrane proteins that mediate cell-cell adhesions. Differential expression of cadherins (E-cadherins are associated with luminal cells and P-cadherins are  with basal cells) are suggested to promote self-organisation to form bilayer structures in the mammary gland via cellular affinity to homophilic interactions \cite{Knudsen2005}. There is growing evidence for an inverse relationship between Notch and E-cadherins in biological systems, including mammary epithelia \cite{Pereira2006,Leong2007,chen2010hypoxia}. In addition, it has been verified that E-cadherins located between luminal cells promote lumen formation during mammary organoid development \cite{Shamir2014}. Therefore, we propose that there exists a cadherin adhesion dependent Notch inhibition mechanism that promotes the Notch signalling between layers of cells (\autoref{fig:cadherins}). \par

 \begin{figure}[H]
         \centering
         \includegraphics[width=0.6\textwidth]{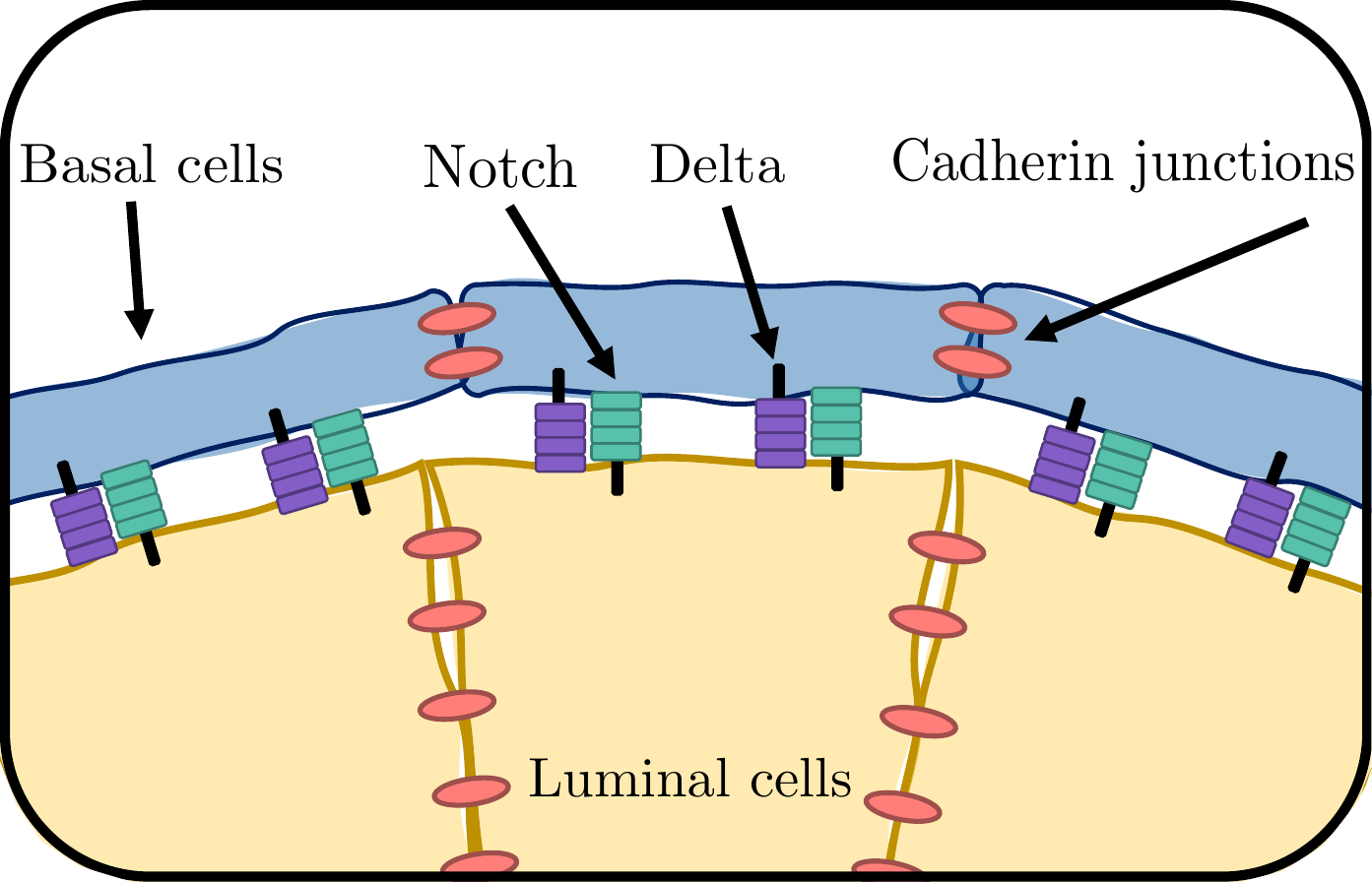}

 			\caption{Proposed spatial distribution of Notch, Delta and Cadherin junctions within a developing mammary organoid. Due to the adhesion required to maintain the bilayer structure with a hollow lumen, tight junctions form, inhibiting the function of the membrane bound Notch receptors and Delta ligands between cells in the same layers.}
        \label{fig:cadherins}
\end{figure}

\section*{Acknowledgements}
JWM is supported by Knowledge Economy Skills Scholarships (KESS2), a pan-Wales higher-level skills initiative led by Bangor University on behalf of the Higher Education sector in Wales. It is part-funded by the Welsh Government’s European Social Fund (ESF).

\section*{Supporting information}
Source code for both dynamic and static lattice simulations can be found at \url{http://bit.ly/Bilayer_sims_repo}.

\bibliographystyle{unsrtnat}
 \bibliography{IMA_bI_ND_Graph_arxiv}

\begin{thebibliography}{47}
\providecommand{\natexlab}[1]{#1}
\providecommand{\url}[1]{\texttt{#1}}
\expandafter\ifx\csname urlstyle\endcsname\relax
  \providecommand{\doi}[1]{doi: #1}\else
  \providecommand{\doi}{doi: \begingroup \urlstyle{rm}\Url}\fi

\bibitem[Gilbert(2016)]{GilbertScottF.2016Db}
Scott~F. Gilbert.
\newblock \emph{Developmental biology}, chapter~1.
\newblock Sinauer Associates Inc, eleventh edition. edition, 2016.

\bibitem[Cagan(2009)]{cagan2009principles}
Ross Cagan.
\newblock Principles of drosophila eye differentiation.
\newblock \emph{Current topics in developmental biology}, 89:\penalty0
  115--135, 2009.
\newblock \doi{10.1016/S0070-2153(09)89005-4}.

\bibitem[Togashi et~al.(2011)Togashi, Kominami, Waseda, Komura, Miyoshi,
  Takeichi, and Takai]{togashi2011nectins}
Hideru Togashi, Kanoko Kominami, Masazumi Waseda, Hitomi Komura, Jun Miyoshi,
  Masatoshi Takeichi, and Yoshimi Takai.
\newblock Nectins establish a checkerboard-like cellular pattern in the
  auditory epithelium.
\newblock \emph{Science}, 333:\penalty0 1144--1147, 2011.
\newblock \doi{10.1126/science.1208467}.

\bibitem[Kim et~al.(2015)Kim, Huang, Critser, Yang, Chan, Wang, Voytik-Harbin,
  Bernstein, and Yoder]{kim2015notch}
Hyojin Kim, Lan Huang, Paul~J. Critser, Zhenyun Yang, Rebecca~J. Chan, Lin
  Wang, Sherry~L Voytik-Harbin, Irwin~D. Bernstein, and Mervin~C. Yoder.
\newblock Notch ligand delta-like 1 promotes in vivo vasculogenesis in human
  cord blood--derived endothelial colony forming cells.
\newblock \emph{Cytotherapy}, 17:\penalty0 579--592, 2015.
\newblock \doi{10.1016/j.jcyt.2014.12.003}.

\bibitem[Turing(1952)]{turing1952chemical}
Alan~M. Turing.
\newblock The chemical basis of morphogenesis.
\newblock \emph{Philosophical Transactions of the Royal Society of London.
  Series B, Biological Sciences}, 237\penalty0 (641):\penalty0 37--72, 1952.
\newblock \doi{10.1098/rstb.1952.0012}.

\bibitem[Hadjivasiliou et~al.(2016)Hadjivasiliou, Hunter, and
  Baum]{Hadjivasiliou2016}
Zena Hadjivasiliou, Ginger~L. Hunter, and Buzz Baum.
\newblock A new mechanism for spatial pattern formation via lateral and
  protrusionmediated lateral signalling.
\newblock \emph{Journal of the Royal Society Interface}, 13, 2016.
\newblock \doi{10.1098/rsif.2016.0484}.

\bibitem[Manz and Groves(2010)]{manz2010spatial}
Boryana~N. Manz and Jay~T. Groves.
\newblock Spatial organization and signal transduction at intercellular
  junctions.
\newblock \emph{Nature Reviews Molecular Cell Biology}, 11:\penalty0 342--352,
  2010.
\newblock \doi{10.1038/nrm2883}.

\bibitem[Bray(2006)]{Bray2006}
Sarah~J. Bray.
\newblock Notch signalling: A simple pathway becomes complex.
\newblock \emph{Nature Reviews Molecular Cell Biology}, 7:\penalty0 678--689,
  2006.
\newblock \doi{10.1038/nrm2009}.

\bibitem[Collier et~al.(1996)Collier, Monk, Maini, and Lewis]{Collier1996}
Joanne~R. Collier, Nicholas~A.M. Monk, Philip~K. Maini, and Julian~H. Lewis.
\newblock Pattern formation by lateral inhibition with feedback: A mathematical
  model of delta-notch intercellular signalling.
\newblock \emph{Journal of Theoretical Biology}, 1996.
\newblock \doi{10.1006/jtbi.1996.0233}.

\bibitem[Wearing and Sherratt(2001)]{Wearing2001}
Helen~J. Wearing and Jonathan~A. Sherratt.
\newblock Nonlinear analysis of juxtacrine patterns.
\newblock \emph{SIAM Journal on Applied Mathematics}, 62:\penalty0 283--309,
  2001.
\newblock \doi{10.1137/S003613990037220X}.

\bibitem[Webb and Owen(2004{\natexlab{a}})]{Webb2004}
Steven~D. Webb and Markus~R. Owen.
\newblock Oscillations and patterns in spatially discrete models for
  developmental intercellular signalling.
\newblock \emph{Journal of Mathematical Biology}, 48:\penalty0 444--476,
  2004{\natexlab{a}}.
\newblock \doi{10.1007/s00285-003-0247-1}.

\bibitem[Webb and Owen(2004{\natexlab{b}})]{Webb2004_osc}
Steven~D. Webb and Markus~R. Owen.
\newblock Intra-membrane ligand diffusion and cell shape modulate juxtacrine
  patterning.
\newblock \emph{Journal of Theoretical Biology}, 230:\penalty0 99--117,
  2004{\natexlab{b}}.
\newblock \doi{10.1016/j.jtbi.2004.04.024}.

\bibitem[Formosa-Jordan and Iba{\~{n}}es(2014)]{Formosa-Jordan2014}
Pau Formosa-Jordan and Marta Iba{\~{n}}es.
\newblock Competition in notch signaling with cis enriches cell fate decisions.
\newblock \emph{PLoS ONE}, 9, 2014.
\newblock \doi{10.1371/journal.pone.0095744}.

\bibitem[Lilja et~al.(2018)Lilja, Rodilla, Huyghe, Hannezo, Landragin, Renaud,
  Leroy, Rulands, Simons, and Fre]{Lilja2018}
Anna~M. Lilja, Veronica Rodilla, Mathilde Huyghe, Edouard Hannezo, Camille
  Landragin, Olivier Renaud, Olivier Leroy, Steffen Rulands, Benjamin~D.
  Simons, and Silvia Fre.
\newblock Clonal analysis of notch1-expressing cells reveals the existence of
  unipotent stem cells that retain long-term plasticity in the embryonic
  mammary gland.
\newblock \emph{Nature Cell Biology}, 20:\penalty0 677--687, 2018.
\newblock \doi{10.1038/s41556-018-0108-1}.

\bibitem[Hamada et~al.(2014)Hamada, Watanabe, Lau, Nishida, Hasegawa, Parichy,
  and Kondo]{hamada2014involvement}
Hiroki Hamada, Masakatsu Watanabe, Hiu~Eunice Lau, Tomoki Nishida, Toshiaki
  Hasegawa, David~M Parichy, and Shigeru Kondo.
\newblock Involvement of delta/notch signaling in zebrafish adult pigment
  stripe patterning.
\newblock \emph{Development}, 141:\penalty0 318--324, 2014.
\newblock \doi{10.1242/dev.099804}.

\bibitem[Arcak(2013)]{Arcak2013}
Murat Arcak.
\newblock Pattern formation by lateral inhibition in large-scale networks of
  cells.
\newblock \emph{IEEE Transactions on Automatic Control}, 58:\penalty0
  1250--1262, 2013.
\newblock \doi{10.1109/TAC.2012.2231571}.

\bibitem[{Rufino Ferreira} and Arcak(2013)]{RufinoFerreira2013}
Ana~S. {Rufino Ferreira} and Murat Arcak.
\newblock Graph partitioning approach to predicting patterns in lateral
  inhibition systems.
\newblock \emph{SIAM Journal on Applied Dynamical Systems}, 12:\penalty0
  2012--2031, 2013.
\newblock \doi{10.1137/130910142}.

\bibitem[J{\o}rgensen et~al.(2018)J{\o}rgensen, de~Lichtenberg, Collin, Klinck,
  Ekberg, Engelstoft, Lickert, and Serup]{Jorgensen2018}
Mette~C. J{\o}rgensen, Kristian~H. de~Lichtenberg, Caitlin~A. Collin, Rasmus
  Klinck, Jeppe~H. Ekberg, Maja~S. Engelstoft, Heiko Lickert, and Palle Serup.
\newblock Neurog3-dependent pancreas dysgenesis causes ectopic pancreas in hes1
  mutant mice.
\newblock \emph{Development (Cambridge)}, 145:\penalty0 1--11, 2018.
\newblock \doi{10.1242/dev.163568}.

\bibitem[Fre et~al.(2005)Fre, Huyghe, Mourikis, Robine, Louvard, and
  Artavanis-Tsakonas]{Fre2005}
Silvia Fre, Mathilde Huyghe, Philippos Mourikis, Sylvie Robine, Daniel Louvard,
  and Spyros Artavanis-Tsakonas.
\newblock Notch signals control the fate of immature progenitor cells in the
  intestine.
\newblock \emph{Nature}, 435\penalty0 (7044):\penalty0 964--968, 2005.
\newblock \doi{10.1038/nature03589}.

\bibitem[Lafkas et~al.(2015)Lafkas, Shelton, Chiu, {De Leon Boenig}, Chen,
  Stawicki, Siltanen, Reichelt, Zhou, Wu, Eastham-Anderson, Moore, Roose-Girma,
  Chinn, Hang, Warming, Egen, Lee, Austin, Wu, Payandeh, Lowe, and
  Siebel]{Lafkas2015}
Daniel Lafkas, Amy Shelton, Cecilia Chiu, Gladys {De Leon Boenig}, Yongmei
  Chen, Scott~S. Stawicki, Christian Siltanen, Mike Reichelt, Meijuan Zhou,
  Xiumin Wu, Jeffrey Eastham-Anderson, Heather Moore, Meron Roose-Girma, Yvonne
  Chinn, Julie~Q. Hang, S{\o}ren Warming, Jackson Egen, Wyne~P. Lee, Cary
  Austin, Yan Wu, Jian Payandeh, John~B. Lowe, and Christian~W. Siebel.
\newblock Therapeutic antibodies reveal notch control of transdifferentiation
  in the adult lung.
\newblock \emph{Nature}, 528:\penalty0 127--131, 2015.
\newblock \doi{10.1038/nature15715}.

\bibitem[Lloyd-Lewis et~al.(2019)Lloyd-Lewis, Mourikis, and
  Fre]{Lloyd-Lewis2019}
Bethan Lloyd-Lewis, Philippos Mourikis, and Silvia Fre.
\newblock Notch signalling: sensor and instructor of the microenvironment to
  coordinate cell fate and organ morphogenesis.
\newblock \emph{Current Opinion in Cell Biology}, 61:\penalty0 16--23, 2019.
\newblock \doi{10.1016/j.ceb.2019.06.003}.

\bibitem[Simian and Bissell(2017)]{Simian2017}
Marina Simian and Mina~J. Bissell.
\newblock Organoids: A historical perspective of thinking in three dimensions.
\newblock \emph{Journal of Cell Biology}, 216:\penalty0 31--40, 2017.
\newblock \doi{10.1083/jcb.201610056}.

\bibitem[Bouras et~al.(2008)Bouras, Pal, Vaillant, Harburg, Asselin-Labat,
  Oakes, Lindeman, and Visvader]{Bouras2008}
Toula Bouras, Bhupinder Pal, Fran{\c{c}}ois Vaillant, Gwyndolen Harburg,
  Marie~Liesse Asselin-Labat, Samantha~R. Oakes, Geoffrey~J. Lindeman, and
  Jane~E. Visvader.
\newblock Notch signaling regulates mammary stem cell function and luminal
  cell-fate commitment.
\newblock \emph{Cell Stem Cell}, 3:\penalty0 429--441, 2008.
\newblock \doi{10.1016/j.stem.2008.08.001}.

\bibitem[Lafkas et~al.(2013)Lafkas, Rodilla, Huyghe, Mourao, Kiaris, and
  Fre]{Lafkas2013}
Daniel Lafkas, Veronica Rodilla, Mathilde Huyghe, Larissa Mourao, Hippokratis
  Kiaris, and Silvia Fre.
\newblock Notch3 marks clonogenic mammary luminal progenitor cells in vivo.
\newblock \emph{Journal of Cell Biology}, 203:\penalty0 47--56, 2013.
\newblock \doi{10.1083/jcb.201307046}.

\bibitem[Toffoli and Margolus(1987)]{toffoli1987cellular}
Tommaso Toffoli and Norman Margolus.
\newblock \emph{Cellular automata machines: a new environment for modeling},
  chapter~5, pages 60--65.
\newblock MIT press, 1987.

\bibitem[Bollob{\'a}s(2013)]{bollobas2013modern}
B{\'e}la Bollob{\'a}s.
\newblock \emph{Modern graph theory}, volume 184, chapter~1.
\newblock Springer Science \& Business Media, 2013.

\bibitem[Godsil(1997)]{Godsil1997}
Chris~D. Godsil.
\newblock Compact graphs and equitable partitions.
\newblock \emph{Linear Algebra and Its Applications}, 255:\penalty0 259--266,
  1997.
\newblock \doi{10.1016/S0024-3795(97)83595-1}.

\bibitem[Weiss(1994)]{Weiss1994}
George Weiss.
\newblock Transfer functions of regular linear systems. part i:
  Characterizations of regularity.
\newblock \emph{Transactions of the American Mathematical Society},
  342:\penalty0 827--854, 1994.
\newblock \doi{10.2307/2154655}.

\bibitem[Angeli and Sontag(2003)]{Angeli2003}
David Angeli and Eduardo~D. Sontag.
\newblock Monotone control systems.
\newblock \emph{IEEE Transactions on Automatic Control}, 48:\penalty0
  1684--1698, 2003.
\newblock \doi{10.1109/TAC.2003.817920}.

\bibitem[van~der Schaft(1996)]{DerSchaft1996}
Arjan~J. van~der Schaft.
\newblock \emph{$\mathcal{L}_{2}$ Gain and Passivity Techniques in Nonlinear
  Control}, chapter~1.
\newblock 1996.

\bibitem[Doyle et~al.(2013)Doyle, Francis, and Tannenbaum]{doyle2013feedback}
John~C. Doyle, Bruce~A. Francis, and Allen~R. Tannenbaum.
\newblock \emph{Feedback control theory}.
\newblock Courier Corporation, 2013.

\bibitem[Angeli and Sontag(2004)]{Angeli2004}
David Angeli and Eduardo~D. Sontag.
\newblock \emph{Interconnections of Monotone Systems with Steady-State
  Characteristics}, pages 135--154.
\newblock 2004.
\newblock \doi{10.1007/978-3-540-39983-4_9}.

\bibitem[Chang et~al.(2008)Chang, Pearson, and Zhang]{Chang2008}
K.~Ching Chang, Kelly Pearson, and Tan Zhang.
\newblock Perron-frobenius theorem for nonnegative tensors.
\newblock \emph{Communications in Mathematical Sciences}, 6:\penalty0 507--520,
  2008.
\newblock \doi{10.4310/CMS.2008.v6.n2.a12}.

\bibitem[Haddad and Chellaboina(2011)]{haddad2011nonlinear}
Wassim~M. Haddad and VijaySekhar Chellaboina.
\newblock \emph{Nonlinear dynamical systems and control: a Lyapunov-based
  approach}.
\newblock Princeton university press, 2011.

\bibitem[Mirams et~al.(2013)Mirams, Arthurs, Bernabeu, Bordas, Cooper, Corrias,
  Davit, Dunn, Fletcher, Harvey, Marsh, Osborne, Pathmanathan, Pitt-Francis,
  Southern, Zemzemi, and Gavaghan]{Mirams2013}
Gary~R. Mirams, Christopher~J. Arthurs, Miguel~O. Bernabeu, Rafel Bordas,
  Jonathan Cooper, Alberto Corrias, Yohan Davit, Sara~Jane Dunn, Alexander~G.
  Fletcher, Daniel~G. Harvey, Megan~E. Marsh, James~M. Osborne, Pras
  Pathmanathan, Joe Pitt-Francis, James Southern, Nejib Zemzemi, and David~J.
  Gavaghan.
\newblock Chaste: An open source c++ library for computational physiology and
  biology.
\newblock \emph{PLoS Computational Biology}, 9:\penalty0 e1002970, 2013.
\newblock \doi{10.1371/journal.pcbi.1002970}.

\bibitem[Osborne et~al.(2017)Osborne, Fletcher, Pitt-Francis, Maini, and
  Gavaghan]{Osborne2017}
James~M. Osborne, Alexander~G. Fletcher, Joe~M. Pitt-Francis, Philip~K. Maini,
  and David~J. Gavaghan.
\newblock Comparing individual-based approaches to modelling the
  self-organization of multicellular tissues.
\newblock \emph{PLOS Computational Biology}, 13:\penalty0 e1005387, 2017.
\newblock \doi{10.1371/journal.pcbi.1005387}.

\bibitem[Atwell et~al.(2015)Atwell, Qin, Gavaghan, Kugler, Hubbard, and
  Osborne]{Atwell2015}
Kathryn Atwell, Zhao Qin, David Gavaghan, Hillel Kugler, E.~Jane~Albert
  Hubbard, and James~M. Osborne.
\newblock Mechano-logical model of c. elegans germ line suggests feedback on
  the cell cycle.
\newblock \emph{Development (Cambridge)}, 142\penalty0 (22):\penalty0
  3902--3911, 2015.
\newblock \doi{10.1242/dev.126359}.

\bibitem[Purcell(1977)]{purcell1977life}
Edward~M. Purcell.
\newblock Life at low reynolds number.
\newblock \emph{American journal of physics}, 45:\penalty0 3--11, 1977.

\bibitem[Chapra(2012)]{chapra2012applied}
Steven~C. Chapra.
\newblock \emph{Applied numerical methods with MATLAB for engineers and
  scientists}, chapter~23, pages 555--572.
\newblock New York: McGraw-Hill, 2012.

\bibitem[Pathmanathan et~al.(2009)Pathmanathan, Cooper, Fletcher, Mirams,
  Murray, Osborne, Pitt-Francis, and Walter]{pathmanathan2009computational}
Pras Pathmanathan, J.~Cooper, Alexander~G. Fletcher, Gary~R. Mirams, Philip~K.
  Murray, James~M. Osborne, Joe Pitt-Francis, and Jon~S. Walter, Alex
  .and~Chapman.
\newblock A computational study of discrete mechanical tissue models.
\newblock \emph{Physical biology}, 6:\penalty0 036001, 2009.
\newblock \doi{10.1088/1478-3975/6/3/036001}.

\bibitem[Gonz{\'a}lez(2010)]{gonzalez2010measurement}
{\'A}lvaro Gonz{\'a}lez.
\newblock Measurement of areas on a sphere using fibonacci and
  latitude-longitude lattices.
\newblock \emph{Mathematical Geosciences}, 42:\penalty0 49, 2010.
\newblock \doi{10.1007/s11004-009-9257}.

\bibitem[Wearing et~al.(2000)Wearing, Owen, and Sherratt]{Wearing2000}
Helen~J. Wearing, Markus~R. Owen, and Jonathan~A. Sherratt.
\newblock Mathematical modelling of juxtacrine patterning.
\newblock \emph{Bulletin of Mathematical Biology}, 62:\penalty0 293--320, 2000.
\newblock \doi{10.1006/bulm.1999.0152}.

\bibitem[Knudsen and Wheelock(2005)]{Knudsen2005}
Karen~A. Knudsen and Margaret~J. Wheelock.
\newblock Cadherins and the mammary gland.
\newblock \emph{Journal of Cellular Biochemistry}, 95:\penalty0 488--496, 2005.
\newblock \doi{10.1002/jcb.20419}.

\bibitem[Pereira et~al.(2006)Pereira, Teixeira, Pinho, Ferreira, Fernandes,
  Oliveira, Seruca, Suriano, and Casares]{Pereira2006}
Paulo~S. Pereira, Alexandra Teixeira, Sofia Pinho, Paulo Ferreira, Joana
  Fernandes, Carla Oliveira, Raquel Seruca, Gianpaolo Suriano, and Fernando
  Casares.
\newblock E-cadherin missense mutations, associated with hereditary diffuse
  gastric cancer (hdgc) syndrome, display distinct invasive behaviors and
  genetic interactions with the wnt and notch pathways in drosophila epithelia.
\newblock \emph{Human Molecular Genetics}, 15:\penalty0 1704--1712, 2006.
\newblock \doi{10.1093/hmg/ddl093}.

\bibitem[Leong et~al.(2007)Leong, Niessen, Kulic, Raouf, Eaves, Pollet, and
  Karsan]{Leong2007}
Kevin~G. Leong, Kyle Niessen, Iva Kulic, Afshin Raouf, Connie Eaves, Ingrid
  Pollet, and Aly Karsan.
\newblock Jagged1-mediated notch activation induces epithelial-to-mesenchymal
  transition through slug-induced repression of e-cadherin.
\newblock \emph{Journal of Experimental Medicine}, 204:\penalty0 2935--2948,
  2007.
\newblock \doi{10.1084/jem.20071082}.

\bibitem[Chen et~al.(2010)Chen, Imanaka, and Griffin]{chen2010hypoxia}
Jing Chen, Naoko Imanaka, and James~D. Griffin.
\newblock Hypoxia potentiates notch signaling in breast cancer leading to
  decreased e-cadherin expression and increased cell migration and invasion.
\newblock \emph{British journal of cancer}, 102:\penalty0 351--360, 2010.

\bibitem[Shamir et~al.(2014)Shamir, Pappalardo, Jorgens, Coutinho, Tsai, Aziz,
  Auer, Tran, Bader, and Ewald]{Shamir2014}
Eliah~R. Shamir, Elisa Pappalardo, Danielle~M. Jorgens, Kester Coutinho,
  Wen~Ting Tsai, Khaled Aziz, Manfred Auer, Phuoc~T. Tran, Joel~S. Bader, and
  Andrew~J. Ewald.
\newblock Twist1-induced dissemination preserves epithelial identity and
  requires e-cadherin.
\newblock \emph{Journal of Cell Biology}, 204:\penalty0 839--856, 2014.
\newblock \doi{10.1083/jcb.201306088}.

\end{thebibliography}

\end{document}